\documentclass[12pt,onecolumn,draftcls]{IEEEtran}
\usepackage{graphicx}
\usepackage{epsfig}
\usepackage{subfigure}
\usepackage{amssymb}
\usepackage{amsbsy}
\usepackage{amsmath}
\usepackage{cite}
\usepackage{url}

\usepackage{color}
\usepackage{float}
\usepackage{times,amsmath,epsfig}
\usepackage{xspace,latexsym,syntonly}
\usepackage{amssymb}
\usepackage{amsfonts}
\usepackage{textcomp}
\usepackage{subfigure}
\usepackage{amsbsy}
\usepackage{cite}

%%%%%%%%%%%%%5
% new symbols
%\newcommand{\force}{\mbox{$\parallel \! \! \! -$}}

% Hatted english letters

% abbreviations for bold english letters (vectors)

% abbreviations for matrices

%abbreviations for greek letters

%Uppercase Greek

%Uppercase bold Greek

%Abbreviatians for blackboard bold

%Abbreviatians for caligraphic letters

%\newcommand{\rH}{^{ \raisebox{1pt}{$\rm \scriptscriptstyle H$}}}

% mathematical functions and constants

% more complex mathematical entities

%%%%%%%%%%%%%%%% Environnements %%%%%%%%%%%%%%%%%%
%\newcounter{theorem}
%\newtheorem{theorem}{Theorem}
%\newtheorem{lemma}[theorem]{Lemma}
%\newtheorem{claim}[theorem]{Claim}
%\newcounter{definition}
%\newtheorem{definition}{Definition}
%
\newtheorem{theorem}{Theorem}
\newtheorem{lemma}{Lemma}
\newtheorem{remark}{Remark}
\newtheorem{definition}{Definition}

%%%%%%%%%%%%%%%%%%% Preambule %%%%%%%%%%%%%%%%%%%%

% mathematical environments
\newcommand{\beq}{\begin{equation}}
\newcommand{\eeq}{\end{equation}}
\newcommand{\bea}{\begin{array}}
\newcommand{\ena}{\end{array}}
\newcommand{\bds}{\begin {itemize}}
\newcommand{\eds}{\end {itemize}}
\newcommand{\bdf}{\begin{definition}}
\newcommand{\blm}{\begin{lemma}}
\newcommand{\edf}{\end{definition}}
\newcommand{\elm}{\end{lemma}}
\newcommand{\bthm}{\begin{theorem}}
\newcommand{\ethm}{\end{theorem}}
\newcommand{\bprp}{\begin{prop}}
\newcommand{\eprp}{\end{prop}}
\newcommand{\bcl}{\begin{claim}}
\newcommand{\ecl}{\end{claim}}
\newcommand{\bcr}{\begin{coro}}
\newcommand{\ecr}{\end{coro}}
\newcommand{\bquest}{\begin{question}}
\newcommand{\equest}{\end{question}}

%Abbreviatians for other symbols

\newcommand{\larrow}{{\larrow}}

%\newcommand{\ha}{\mbox{$\hat{\b{a}}$}}
%\newcommand{\tba}{\mbox{$\tilde{\b{a}}$}}
%\newcommand{\yh}{\mbox{${\hat y}$}}
%\newcommand{\vth}{\mbox{${\boldsymbol {\theta}}$}}
%\newcommand{\ups}{\mbox{${\Upsilon}$}}
%\newcommand{\vthh}{\mbox{${\boldsymbol {\hat \theta}}$}}
%\newcommand{\vtht}{\mbox{${\boldsymbol {\tilde \theta}}$}}
%\newcommand{\veps}{\mbox{${\underline \epsilon}$}}
%\newcommand{\vrho}{\mbox{${\boldsymbol {\rho}}$}}
%\newcommand{\vrhoh}{\mbox{${\boldsymbol {\hat \rho}}$}}
%\newcommand{\vrhot}{\mbox{${\boldsymbol {\tilde \rho}}$}}

%\newcommand{\vy}{\mbox{${\boldsymbol y}$}}
%\newcommand{\vyh}{\mbox{${\boldsymbol \check y}$}}
%\newcommand{\vyht}{\mbox{${\boldsymbol \hat y}$}}
%\newcommand{\vnh}{\mbox{${\boldsymbol \check n}$}}
%\newcommand{\omn}{\mbox{${\om_{_N}}$}}
%\newcommand{\pikon}{\mbox{${\frac{2\pi k}{N}}$}}

% Abrreviations

\def\urltilda{\kern -.15em\lower .7ex\hbox{\~{}}\kern .04em}

\begin{document}
\title{On Projected Stochastic Gradient Descent Algorithm with Weighted Averaging for Least Squares Regression}
\author{Kobi Cohen, Angelia Nedi\'c and R. Srikant
\thanks{Kobi Cohen is with the Department of Electrical and Computer Engineering, Ben-Gurion University of the Negev, Beer-Sheva 84105, Israel, email: yakovsec@bgu.ac.il}
\thanks{Angelia Nedi\'c and R. Srikant are with the Coordinated Science Laboratory, University of Illinois at Urbana-Champaign, IL 61801.
Email:$\left\{\mbox{angelia, rsrikant}\right\}$@illinois.edu}
\thanks{Part of this work was presented at the The 41st International Conference on Acoustics, Speech, and Signal Processing (ICASSP), 2016.}
\thanks{This work was supported by AFOSR MURI FA 9550--10--1--0573, ONR Grant N00014--13--1--003, NSF Grant CNS--1161404.}}
\date{}
\maketitle

%-------------abstract----------------------------------
%-------------------------------------------------------
\begin{abstract}
\label{sec:abstract}
The problem of least squares regression of a $d$-dimensional unknown parameter is considered. A stochastic gradient descent based algorithm with weighted iterate-averaging that uses a single pass over the data is studied and its convergence rate is analyzed. We first consider a bounded constraint set of the unknown parameter. Under some standard regularity assumptions, we provide an explicit $O(1/k)$ upper bound on the convergence rate, depending on the variance (due to the additive noise in the measurements) and the size of the constraint set. We show that the variance term dominates the error and decreases with rate $1/k$, while the term which is related to the size of the constraint set decreases with rate $\log k/k^2$. We then compare the asymptotic ratio $\rho$ between the convergence rate of the proposed scheme and the empirical risk minimizer (ERM) as the number of iterations approaches infinity. We show that $\rho\leq 4$ under some mild conditions for all $d\geq 1$. We further improve the upper bound by showing that $\rho\leq 4/3$ for the case of $d=1$ and unbounded parameter set. Simulation results demonstrate strong performance of the algorithm as compared to existing methods, and coincide with $\rho\leq 4/3$ even for large $d$ in practice.
\end{abstract}
%
% For peerreview papers, this IEEEtran command inserts a page break and
% creates the second title. It will be ignored for other modes.
\IEEEpeerreviewmaketitle

% Keyword section
\def\keywords{\vspace{.5em}
{\bfseries\textit{Index Terms}---\,\relax%
}}
\def\endkeywords{\par}
\keywords
Convex optimization, projected stochastic gradient descent, weighted averaging, empirical risk minimizer.

\section{Introduction}
\label{sec:introduction}

For large-scale optimization problems, it is often desirable to minimize an unknown objective under computational constraints. Stochastic Gradient Descent (SGD) is a popular optimization method in a variety of machine learning tasks when dealing with very large data or with data streams. Specifically, instead of computing the true gradient (which is often computationally expensive) as in a standard gradient descent algorithm, in SGD-based methods the gradient is approximated by a single (or few) sample at each iteration. Using stochastic approximation analysis, it has been shown that SGD converges almost surely to a global minimum when the objective function is convex (otherwise it converges to a local minimum) under an appropriate learning rate and some regularity conditions \cite{bottou1998online}.

In this paper, we consider the problem of least mean squares regression, in which a $d$-dimensional unknown parameter is desired to be estimated from streaming noisy measurements. Specifically, let $\boldsymbol{x}$, $y$ be random variables with values in $\mathbb{R}^d$, and $\mathbb{R}$, respectively, and let $\Omega\subseteq\mathbb{R}^d$ be a compact convex constraint set for the unknown parameter. It is desired to minimize the expected least squares loss:
\beq
\bea{l}
\label{eq:opt_LS_intro}
\displaystyle\min_{\boldsymbol{\omega}}E\left[||\boldsymbol{x}^T\boldsymbol{\omega}-y||^2\right] \vspace{0.3cm}\\
\mbox{subject to}\;\; \boldsymbol{\omega}\in\Omega\subseteq\mathbb{R}^d
\ena
\eeq
from the samples stream $(\boldsymbol{x}_k, y_k)$ at times $k=1, 2, ...$ Motivated by recent studies on accelerated methods of SGD-based algorithms, we focus on a projected SGD method with weighted iterate-averaging to solve (\ref{eq:opt_LS_intro}).

\subsection{Main Results}
\label{ssec:main}

Solving (\ref{eq:opt_LS_intro}) directly is computationally inefficient since it requires high storage memory for the entire data and high computational complexity. Thus, our goal is to solve (\ref{eq:opt_LS_intro}) efficiently so that the running time and space usage are small. Motivated by recent studies showing that using averaging of the estimated parameter accelerates the convergence of SGD-based algorithms, we propose and analyze a Projected SGD with Weighted Averaging (PSGD-WA) algorithm for solving (\ref{eq:opt_LS_intro}). Specifically, a projected SGD iterates are computed at each time $k$, where averaged iterates are computed as byproducts of the algorithm (but not used in the construction of the PSGD iterates). The averaging weights are specified in terms of the step-sizes that the algorithm uses such that recent measurements are given higher weights (see Section \ref{ssec:constrained} for details). Our main results are as follows: i) We consider a bounded constraint set of the unknown parameter and propose a PSGD-WA algorithm that requires a single pass over the data. The proposed step size has a general form\footnote{It should be noted that previous studies on PSGD algorithms with weighted averaging (see \cite{lacoste2012simpler ,nedic2014stochastic}) considered only a fixed form of the step size without tuning parameters.} of $c\frac{\gamma}{k+\gamma}$, where $c>0$, $\gamma\geq 1$ are tunable parameters; ii) in contrast to previous studies on PSGD algorithms with weighted averaging showing a general order $O(1/k)$ of the error rate, we provide an explicit finite sample upper bound on the error obtained by the proposed PSGD-WA algorithm, depending on the variance (due to the additive noise in the measurements) and the size of the constraint set. We show that the variance term dominates the error and decreases with rate $1/k$, while the term which is related to the diameter of the constraint set decreases with rate $\log k/k^2$; iii) we compare the asymptotic ratio $\rho$ between the convergence rate of the proposed PSGD-WA and the empirical risk minimizer (ERM) (which is the minimizer in the absence of computational constraints) as the number of iterations approaches infinity. We show that $\rho\leq 4$ for all $d\geq 1$ when the random components of $\boldsymbol{x}$ are identically distributed and uncorrelated. We further improve the upper bound by showing that $\rho\leq 4/3$ for the case of $d=1$ and $x_k=x$ for all $k$. Simulation results demonstrate strong performance of the algorithm as compared to existing methods, and coincide with $\rho\leq 4/3$ even for large $d$ in practice.

\subsection{Related Work}
\label{ssec:related}

SGD is a computationally efficient method for solving large-scale optimization problems when dealing with very large data or with data streams. Accelerating SGD-based algorithms using averaging techniques has been studied in past and more recent years in \cite{nemirovskii1978cezare, nesterov1983method, cao1985convergence, polyak1992acceleration, polyak2001random, ang2001new, nesterov2004introductory, shin2004variable, juditsky2008learning, tseng2008accelerated, zhang2008new, hwang2009variable, beck2009fast, nesterov2009primal, nemirovski2009robust, xiao2009dual, rakhlin2011making, ghadimi2012optimal, lan2012optimal, lacoste2012simpler, schmidt2013minimizing, bonnabel2013stochastic, nedic2014stochastic, defossez2015averaged}. In \cite{tseng2008accelerated}, Tseng has developed an accelerated SGD-based algorithm with iterate-averaging that achieves convergence rate of $1/k^2$ for problems where the objective function has Lipschitz continuous gradients. This rate is known to be the best in the class of convex functions with Lipshitz gradients \cite{nesterov2004introductory}, for which the first fast algorithm was originally constructed by Nesterov \cite{nesterov1983method} for unconstrained problems, and was extended recently by Beck and Teboulle in \cite{beck2009fast} to a larger class of problems. Ghadimi and Lan used averaging in \cite{ghadimi2012optimal} to develop an algorithm that has the rate $1/k^2$ when the objective function has Lipschitz continuous gradients, and rate $1/k$ when the objective function is strongly convex. Juditsky et al. \cite{juditsky2008learning} considered a mirror-descent algorithm with averaging to construct aggregate estimators with the best achievable learning rate. Averaging techniques for the mirror-descent algorithm for stochastic problems involving the sum of a smooth objective and a nonsmooth objective function have been studied by Lan in \cite{lan2012optimal}. Other related works are concerned with iterate-averaging for best achievable rate of stochastic subgradients methods \cite{nemirovski2009robust, rakhlin2011making}, as well as gradient-averaging \cite{ang2001new, shin2004variable, blatt2007convergent, zhang2008new, hwang2009variable, nesterov2009primal, xiao2009dual, schmidt2013minimizing} and a sort of momentum \cite{tseng1998incremental, roux2012stochastic}, in which the algorithm uses a sort of weighting over previous gradients (instead of the iterate minimizer) in the construction of the algorithm.

The averaged iterates considered in this paper are not used in the construction of the PSGD iterates, but only computed as byproducts of them (see Section \ref{ssec:constrained} for details). Such methods have been studied by Nemirovski and Yudin \cite{nemirovskii1978cezare} for convex-concave saddle-point problems, by Polyak and Juditsky \cite{polyak1992acceleration} for stochastic gradient approximations and by Polyak \cite{polyak2001random} for convex feasibility problems. In \cite{polyak1992acceleration}, an asymptotically optimal performance has been achieved. However, a finite sample analysis remained open. More recently, Lacoste-Julien et al. \cite{lacoste2012simpler} used this averaging approach for a projected stochastic subgradient method to achieve $1/k$ convergence rate for strongly convex functions. Nedi\'c and Lee \cite{nedic2014stochastic} used a similar form of this scheme for a more general projected stochastic subgradient method using Bregman distances, which achieves $1/k$ convergence rate for strongly convex functions, and $1/\sqrt{k}$ convergence rate for general convex functions.

In this paper we focus on the testing error (i.e., the expected error on unseen data) of regression from noisy measurements, in which the convergence rate deteriorates (varies from $1/k$ to $1/\sqrt{k}$ per-iterate). While accelerating methods cannot be made faster, they have ability to produce estimates with low-variance, which attracted much interest in recent years \cite{Johnson2013accelerating, frostig2014competing, shalev2014accelerated, defossez2015averaged}. We focus on the strongly convex case, in which $O(1/k)$ is the best attainable convergence rate \cite{frostig2014competing}. However, this convergence rate is only optimal in the limit of large samples, and in practice other non-dominant terms may come into play in the finite sample regime. In \cite{frostig2014competing}, Frostig et al. have developed a Streaming Stochastic Variance Reduced Gradient (Streaming SVRG) algorithm using a constant step size, inspired by the SVRG algorithm developed by Johnson and Zhang \cite{Johnson2013accelerating}, and provided a finite sample analysis for a general strongly convex regression problems. They showed that the asymptotic ratio $\rho$ between the convergence rate of the Streaming SVRG and the ERM algorithm approaches $\rho=1$ as the number of iterations approaches infinity. However, achieving $\rho=1$ requires the sample batch size to grow geometrically occasionally for gradient-computing, as well as setting the constant step size close to zero (which deteriorates performance in the finite sample regime). In \cite{defossez2015averaged}, Defossez and Bach have developed a SGD algorithm using a constant step size with averaging for least mean squares regression, and provided a finite sample analysis. They showed that $\rho=1$ as the constant step size is set close to zero, which deteriorates performance in the finite sample regime. In this paper, however, the proposed PSGD-WA algorithm uses decreasing step-sizes which can be large in the beginning of the algorithm and decrease as the number of iterations increases. The proposed PSGD-WA uses a weighted averaging of the estimates, by letting higher weights to recent measurements. We provide a finite sample analysis as well as an asymptotic upper bound $\rho\leq 4$ when $d\geq 1$ and $\rho\leq 4/3$ when $d=1$. Note that our results does not require the sample batch size to grow geometrically occasionally as in \cite{frostig2014competing} or setting small step-sizes in the beginning of the algorithm as in \cite{frostig2014competing, defossez2015averaged}. Thus, the proposed PSGD-WA algorithm is expected to perform well in the non-asymptotic case in addition to the nice asymptotic property as illustrated by simulation results provided in Section \ref{sec:simulation}.

\subsection{Notations}
\label{sec:notations}

Throughout the paper, small letters denote scalars, boldface small letters denote column vectors, and boldface capital letters denote matrices. All vectors are column vectors. The term $\mathbf{z}^T$ denotes the conjugate transpose of the vector $\mathbf{z}$, and $||\cdot||$ denotes the Euclidean norm. The subscript $k$ associated with a r.v. denotes a realization at time $k$.

\section{Problem Statement}
\label{sec:constrained}

Let $\boldsymbol{x}$, $y$ be random variables with values in $\mathbb{R}^d$, and $\mathbb{R}$, respectively. At each time $k$, we observe i.i.d. samples across time $(\boldsymbol{x}_k, y_k)$. We assume that $E\left[\boldsymbol{x}^T\boldsymbol{x}\right]$ is finite and we denote by $R_x=E\left[\boldsymbol{x}\boldsymbol{x}^T\right]$ the correlation matrix of $\boldsymbol{x}$.

It is desired to minimize the expected least squares loss:
\beq
\bea{l}
\label{eq:opt_LS}
\displaystyle\min_{\boldsymbol{\omega}}E\left[||\boldsymbol{x}^T\boldsymbol{\omega}-y||^2\right] \vspace{0.3cm}\\
\mbox{subject to}\;\; \boldsymbol{\omega}\in\Omega\subseteq\mathbb{R}^d
\ena
\eeq
from the samples stream $(\boldsymbol{x}_k, y_k)$ at times $k=1, 2, ...$. It is assumed that $R_x$ is invertible (i.e., strongly convex case).
%(see, e.g., Nesterov, 2004)).
We denote by $\mu$ the smallest eigenvalue of $R_x$, so that $\mu>0$.

We denote the optimal solution of (\ref{eq:opt_LS}) by $\boldsymbol{\omega}^*\in\mathbb{R}^d$, and it is assumed that a decision maker knows that $\boldsymbol{\omega}^*$ lies in the interior of a convex constraint set $\Omega\subseteq\mathbb{R}^d$.

Let
\begin{center}
$f(\boldsymbol{\omega})\triangleq E\left[||\boldsymbol{x}^T\boldsymbol{\omega}-y||^2\right]$
\end{center}
be the mean squares loss as a function of $\boldsymbol{\omega}$, and $f^*=f(\boldsymbol{\omega}^*)\in\mathbb{R}$ be the value at the minimum. The term $v_k=\boldsymbol{x}_k^T\boldsymbol{\boldsymbol{\omega}}^*-y_k$ denotes the zero-mean additive noise with variance $\sigma^2$.
The gradient of $f$ at $\boldsymbol{\omega}$ is defined by $\nabla\boldsymbol{f}\left(\boldsymbol{\omega}\right)
=E\left[2\boldsymbol{x}\left(\boldsymbol{x}^T\boldsymbol{\omega}-y\right)\right]
=E\left[\boldsymbol{g}_k(\boldsymbol{\omega})\right]$, where $\boldsymbol{g}_k(\boldsymbol{\omega})\triangleq 2\boldsymbol{x}_{k}\left(\boldsymbol{x}_{k}^T\boldsymbol{\omega}-y_{k}\right)$
is the estimate of the gradient at $\boldsymbol{\omega}$ based on a single sample at iteration $k$\footnote{When a few samples are available per iteration we estimate the gradient by averaging.}.
For convenience, we write $\nabla \boldsymbol{f}_k\triangleq\nabla \boldsymbol{f}\left(\boldsymbol{\omega}_k\right)$ and $\boldsymbol{g}_k\triangleq\boldsymbol{g}_k(\boldsymbol{\omega}_k)$ when referring to the gradients at $\boldsymbol{\omega}_k$, where $\boldsymbol{\omega}_k$ is the estimate of $\boldsymbol{\omega}$ at iteration $k$ obtained by an iterative algorithm (see the next section for details). The error at the $k^{th}$ iteration is defined by $\boldsymbol{e}_k\triangleq\boldsymbol{\omega}_k-\boldsymbol{\omega}^*$. Note that
\beq
\bea{l}
\label{eq:g_k_develop1}
\displaystyle
\boldsymbol{g}_k=2\boldsymbol{x}_{k}\left(\boldsymbol{x}_{k}^T\boldsymbol{\omega}_{k}-y_{k}\right)\vspace{0.3cm}\\
=2\boldsymbol{x}_{k}\left(\boldsymbol{x}_{k}^T\boldsymbol{\omega}_{k}-\boldsymbol{x}_{k}^T\boldsymbol{\omega}^*+v_{k}\right)\vspace{0.3cm}\\
=2\boldsymbol{x}_{k}\left(\boldsymbol{x}_{k}^T \boldsymbol{e}_k+v_{k}\right).
\ena
\eeq

\section{Projected Stochastic Gradient descent algorithm with Weighted Averaging}
\label{ssec:constrained}

We investigate a Projected Stochastic Gradient descent algorithm with Weighted Averaging (PSGD-WA). According to PSGD-WA, we hold two estimates of $\boldsymbol{\omega}^*$ at each iteration, denoted by $\boldsymbol{\omega}_k, \bar{\boldsymbol{\omega}}_k$. The estimate $\boldsymbol{\omega}_k$ is computed at each iteration (say $k$), and $\bar{\boldsymbol{\omega}}_k$ is the weighted average estimate based on all estimates up to time $k$. Let $\lambda_k$ be the step-size at time $k$, and assume that it diminishes with $k$.

Let $\boldsymbol{\omega}_0\in\Omega$ be an initial estimate of $\boldsymbol{\omega}$ (possibly random). At iteration $k=1$ we compute the projected estimate of $\boldsymbol{\omega}^*$ based on the random measurements $(\boldsymbol{x}_{0}, y_{0})$ and the initial estimate $\boldsymbol{\omega}_{0}$:
\beq
\label{eq:PSGD_WA1_init}
\bea{l}
\displaystyle\boldsymbol{\omega}_{1}=\arg\;\min_{\boldsymbol{\omega}\in\Omega}
\left\{\lambda_0 \boldsymbol{g}_{0}^T\cdot\left(\boldsymbol{\omega}-\boldsymbol{\omega}_0\right)+
\frac{1}{2}||\boldsymbol{\omega}-\boldsymbol{\omega}_0||^2\right\},
\ena
\eeq
and we compute this estimate iteratively. In general, at iteration $k+1$ we compute the projected estimate of $\boldsymbol{\omega}^*$ based on the random measurements $(\boldsymbol{x}_{k}, y_{k})$ and the last estimate $\boldsymbol{\omega}_{k}$:
\beq
\label{eq:PSGD_WA1}
\bea{l}
\displaystyle\boldsymbol{\omega}_{k+1}=\arg\;\min_{\boldsymbol{\omega}\in\Omega}
\left\{\lambda_k \boldsymbol{g}_{k}^T\cdot\left(\boldsymbol{\omega}-\boldsymbol{\omega}_k\right)+
\frac{1}{2}||\boldsymbol{\omega}-\boldsymbol{\omega}_k||^2\right\} \vspace{0.3cm}\\\hspace{6cm}\forall k\geq 0.
\ena
\eeq
It can be verified that $\boldsymbol{\omega}_{k+1}$ projects the unconstrained gradient descent iterate $\boldsymbol{\omega}_k-\lambda_k\boldsymbol{g}_{k}$ into $\Omega$. Motivated by previous studies on SGD with iterate-averaging (e.g., \cite{lacoste2012simpler, nedic2014stochastic}), in addition to the estimate $\boldsymbol{\omega}_{k+1}$, we propose to compute the weighted average estimate:
\beq
\label{eq:PSGD_WA2}
\bea{l}
\displaystyle\bar{\boldsymbol{\omega}}_{k+1}=\sum_{i=0}^{k+1}\beta_{k+1,i} \boldsymbol{\omega}_{i},
\ena
\eeq
where $\beta_{k,0}, \beta_{k,1}, ..., \beta_{k,k}$ are nonnegative scalars with the sum equals $1$, where the weighted average estimate $\displaystyle\bar{\boldsymbol{\omega}}_{k}$ is computed based on the first $k$ iterations. These convex weights will be defined in terms of the step size values $\lambda_0, \lambda_1, ..., \lambda_k$, and $\displaystyle\bar{\boldsymbol{\omega}}_{k}$ will be computed recursively (see (\ref{eq:recursive_constrained}) in Section \ref{ssec:implementation}). In Section \ref{ssec:performance_bounded} we will analyze the convergence rate of $\bar{\boldsymbol{\omega}}_{k}$ to the solution of (\ref{eq:opt_LS}).

\subsection{Implementation and Complexity Discussion}
\label{ssec:implementation}

The PSGD-WA algorithm is simple for implementation as compared to existing methods. At iteration $k$, the algorithm requires to store $\boldsymbol{\omega}_{k}$, the weighted average $\bar{\boldsymbol{\omega}}_{k-1}$ and the normalization term $S_{k-1}=\sum_{r=0}^{k-1}1/\alpha_r$. The weighted average $\bar{\boldsymbol{\omega}}_{k}$ can be updated recursively by computing
\begin{center}
$S_k=S_{k-1}+1/\alpha_k$
\end{center}
and then by setting:
\beq
\label{eq:recursive_constrained}
\displaystyle\bar{\boldsymbol{\omega}}_{k}=\frac{S_{k-1}}{S_k}\bar{\boldsymbol{\omega}}_{k-1}
+\left(1-\frac{S_{k-1}}{S_k}\right)\boldsymbol{\omega}_k.
\eeq
As a result, only $O(1)$ computations are required per iteration as needed by the classic SGD algorithm. Note that PSGD-WA does not require the sample batch size to grow as in \cite{frostig2014competing}. The storage memory required by PSGD-WA is similar to that required by the average SGD with constant step size algorithm proposed in \cite{defossez2015averaged}.

\section{Performance Analysis}
\label{ssec:performance_bounded}

In this section we analyze the algorithm's performance when the constraint set $\Omega$ is bounded. Let $e_{max}=\sup_{\boldsymbol{\omega}\in\Omega}\left\{||\boldsymbol{\omega}-\boldsymbol{\omega}^*||^2\right\}$ be the maximal square error of any projected estimate of $\boldsymbol{\omega}^*$. Let $\mathcal{F}_{k-1}=
\sigma\left\{\boldsymbol{\omega}_0, \boldsymbol{x}_0, y_0, \boldsymbol{x}_1, y_1, ..., \boldsymbol{x}_{k-1}, y_{k-1}\right\}$ be the filtration generated by the history of the algorithm starting at time $0$ up to time $k-1$. Note that $\boldsymbol{\omega}_0, \boldsymbol{\omega}_1, ..., \boldsymbol{\omega}_k$ are known once $\mathcal{F}_{k-1}$ is given\vspace{0.3cm}.

%---------------------lemma-----------------------
\begin{lemma}
\label{lemma:upper_bound_F_k_1}
Assume that (\ref{eq:PSGD_WA1}) is implemented. Then, for all $\boldsymbol{\omega}\in\Omega$ and $k\geq 0$, we have:
\beq
\bea{l}
\label{eq:upper_bound_F_k_1}
\displaystyle\frac{1}{2}E\left[||\boldsymbol{\omega}_{k+1}-\boldsymbol{\omega}||^2|\mathcal{F}_{k-1}\right]+\lambda_k \nabla\boldsymbol{f}_k^T\cdot\left(\boldsymbol{\omega}_k-\boldsymbol{\omega}\right) \vspace{0.3cm}\\\hspace{0.5cm}\displaystyle\leq
\frac{1}{2}||\boldsymbol{\omega}_k-\boldsymbol{\omega}||^2+
2\lambda_k^2E\left[||\boldsymbol{x}_{k}\boldsymbol{x}_{k}^T \boldsymbol{e}_k||^2|\mathcal{F}_{k-1}\right] \vspace{0.3cm}\\\hspace{3cm}\displaystyle
+2\lambda_k^2\sigma^2 E\left[||\boldsymbol{x}_{k}||^2\right]. \vspace{0.3cm}
\ena
\eeq
\end{lemma}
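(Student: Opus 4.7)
The approach I would take is the standard projected-SGD descent lemma, adapted to exploit the specific least-squares structure of $\boldsymbol{g}_k$. Since $\boldsymbol{\omega}_{k+1}$ solves a quadratic program, one can verify from the optimality condition (or equivalently from the fact that $\boldsymbol{\omega}_{k+1}$ is the Euclidean projection of $\boldsymbol{\omega}_k - \lambda_k \boldsymbol{g}_k$ onto $\Omega$) that the iterate map is nonexpansive. Specifically, for any $\boldsymbol{\omega}\in\Omega$,
\beq
\|\boldsymbol{\omega}_{k+1}-\boldsymbol{\omega}\|^2 \;\leq\; \|\boldsymbol{\omega}_k-\lambda_k\boldsymbol{g}_k-\boldsymbol{\omega}\|^2.
\eeq
Expanding the right-hand side, halving, and moving the cross term to the left gives
\beq
\tfrac{1}{2}\|\boldsymbol{\omega}_{k+1}-\boldsymbol{\omega}\|^2 + \lambda_k\,\boldsymbol{g}_k^T(\boldsymbol{\omega}_k-\boldsymbol{\omega}) \;\leq\; \tfrac{1}{2}\|\boldsymbol{\omega}_k-\boldsymbol{\omega}\|^2 + \tfrac{\lambda_k^2}{2}\|\boldsymbol{g}_k\|^2.
\eeq
This is the workhorse inequality; everything else is bookkeeping.

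Next I would take conditional expectation given $\mathcal{F}_{k-1}$. Because $\boldsymbol{\omega}_k$ and $\boldsymbol{\omega}$ are $\mathcal{F}_{k-1}$-measurable and $(\boldsymbol{x}_k,y_k)$ is independent of $\mathcal{F}_{k-1}$, we have $E[\boldsymbol{g}_k\mid\mathcal{F}_{k-1}]=\nabla\boldsymbol{f}_k$, which converts the left-hand cross term into $\lambda_k\nabla\boldsymbol{f}_k^T(\boldsymbol{\omega}_k-\boldsymbol{\omega})$, matching the statement of the lemma.

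The main remaining work is to bound $E[\|\boldsymbol{g}_k\|^2\mid\mathcal{F}_{k-1}]$ by the two pieces appearing on the right-hand side of (\ref{eq:upper_bound_F_k_1}). Using the decomposition $\boldsymbol{g}_k = 2\boldsymbol{x}_k\boldsymbol{x}_k^T\boldsymbol{e}_k + 2v_k\boldsymbol{x}_k$ from (\ref{eq:g_k_develop1}), I would expand
\beq
\|\boldsymbol{g}_k\|^2 = 4\|\boldsymbol{x}_k\boldsymbol{x}_k^T\boldsymbol{e}_k\|^2 + 8 v_k(\boldsymbol{x}_k^T\boldsymbol{e}_k)\|\boldsymbol{x}_k\|^2 + 4 v_k^2\|\boldsymbol{x}_k\|^2,
\eeq
and then invoke the zero-mean property of the noise, $E[v_k\mid\boldsymbol{x}_k]=0$, which kills the cross term when the conditional expectation is taken (since $\boldsymbol{e}_k$ is $\mathcal{F}_{k-1}$-measurable). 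Together with $E[v_k^2\|\boldsymbol{x}_k\|^2]=\sigma^2 E[\|\boldsymbol{x}_k\|^2]$ (independence/conditional-variance assumption on the noise), this yields
\beq
\tfrac{\lambda_k^2}{2}E[\|\boldsymbol{g}_k\|^2\mid\mathcal{F}_{k-1}] = 2\lambda_k^2 E[\|\boldsymbol{x}_k\boldsymbol{x}_k^T\boldsymbol{e}_k\|^2\mid\mathcal{F}_{k-1}] + 2\lambda_k^2\sigma^2 E[\|\boldsymbol{x}_k\|^2],
\eeq
which is precisely the right-hand side of (\ref{eq:upper_bound_F_k_1}). Combining this with the earlier descent inequality finishes the proof.

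The step that needs the most care is the last one: a naive $\|a+b\|^2\leq 2\|a\|^2+2\|b\|^2$ bound on $\|\boldsymbol{g}_k\|^2$ would introduce an extra factor of $2$ and miss the sharper constant in the lemma. The point is that one must preserve the cross term and then let the zero-mean (conditional on $\boldsymbol{x}_k$) property of $v_k$ eliminate it inside the conditional expectation; that is where the tight constant comes from. Once this is observed, the rest is direct algebra.
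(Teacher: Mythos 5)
Your proposal is correct and follows essentially the same route as the paper: both reduce to the descent inequality $\tfrac12\|\boldsymbol{\omega}_{k+1}-\boldsymbol{\omega}\|^2+\lambda_k\boldsymbol{g}_k^T(\boldsymbol{\omega}_k-\boldsymbol{\omega})\le\tfrac12\|\boldsymbol{\omega}_k-\boldsymbol{\omega}\|^2+\tfrac{\lambda_k^2}{2}\|\boldsymbol{g}_k\|^2$ (the paper derives it from the first-order optimality condition of the prox subproblem combined with Young's inequality, you from nonexpansiveness of the Euclidean projection --- equivalent facts), then take conditional expectations using $E[\boldsymbol{g}_k\mid\mathcal{F}_{k-1}]=\nabla\boldsymbol{f}_k$ and bound $E[\|\boldsymbol{g}_k\|^2\mid\mathcal{F}_{k-1}]$ via the decomposition (\ref{eq:g_k_develop1}). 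Your remark that the cross term must be eliminated by the zero-mean property of $v_k$ rather than absorbed through $\|a+b\|^2\le 2\|a\|^2+2\|b\|^2$ is exactly what the paper's constant $4$ (rather than $8$) implicitly requires.
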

\begin{proof}
We first upper bound the term $\lambda_k \boldsymbol{g}_k^T\cdot\left(\boldsymbol{\omega}_{k+1}-\boldsymbol{\omega}\right)$. Since $\boldsymbol{\omega}_{k+1}$ solves (\ref{eq:PSGD_WA1}), we have:
\beq
\bea{l}
\displaystyle
\nabla_{\boldsymbol{\omega}}q(\boldsymbol{\omega}_{k+1})^T\left(\boldsymbol{\omega}-\boldsymbol{\omega}_{k+1}\right)
\vspace{0.3cm}\\\displaystyle
=\left(\lambda_k\boldsymbol{g}_k+\boldsymbol{\omega}_{k+1}-\boldsymbol{\omega}_k\right)^T
    \left(\boldsymbol{\omega}-\boldsymbol{\omega}_{k+1}\right)\geq 0  \;\;\forall \boldsymbol{\omega}\in\Omega,
\ena
\eeq
where $q(\boldsymbol{\omega})=\lambda_k \boldsymbol{g}_{k}^T\cdot\left(\boldsymbol{\omega}-\boldsymbol{\omega}_k\right)+
\frac{1}{2}||\boldsymbol{\omega}-\boldsymbol{\omega}_k||^2$ is the objective function in (\ref{eq:PSGD_WA1}).
Arranging terms yields:
\beq
\bea{l}
\displaystyle
\lambda_k\boldsymbol{g}_k^T\left(\boldsymbol{\omega}_{k+1}-\boldsymbol{\omega}\right)\vspace{0.3cm}\\\displaystyle
\leq\left(\boldsymbol{\omega}_{k+1}-\boldsymbol{\omega}_k\right)^T\left(\boldsymbol{\omega}-\boldsymbol{\omega}_{k+1}\right)
\vspace{0.3cm}\\\displaystyle
=\frac{1}{2}||\boldsymbol{\omega}_k-\boldsymbol{\omega}||^2
 -\frac{1}{2}||\boldsymbol{\omega}_{k+1}-\boldsymbol{\omega}||^2
 -\frac{1}{2}||\boldsymbol{\omega}_k-\boldsymbol{\omega}_{k+1}||^2.
\ena
\eeq
Next, we lower bound the term $\lambda_k \boldsymbol{g}_k^T\cdot\left(\boldsymbol{\omega}_{k+1}-\boldsymbol{\omega}\right)$.
\beq
\bea{l}
\displaystyle
\lambda_k\boldsymbol{g}_k^T\left(\boldsymbol{\omega}_{k+1}-\boldsymbol{\omega}\right)\vspace{0.3cm}\\\displaystyle
=\lambda_k\boldsymbol{g}_k^T\left(\boldsymbol{\omega}_{k+1}-\boldsymbol{\omega}_k\right)
+\lambda_k\boldsymbol{g}_k^T\left(\boldsymbol{\omega}_k-\boldsymbol{\omega}\right)
\vspace{0.3cm}\\\displaystyle
\geq -\frac{\lambda_k^2}{2}||\boldsymbol{g}_k||^2-\frac{1}{2}||\boldsymbol{\omega}_{k+1}-\boldsymbol{\omega}_k||^2
        +\lambda_k\boldsymbol{g}_k^T\left(\boldsymbol{\omega}_k-\boldsymbol{\omega}\right).
\ena
\eeq
Finally, combining the lower and upper bounds on $\lambda_k \boldsymbol{g}_k^T\cdot\left(\boldsymbol{\omega}_{k+1}-\boldsymbol{\omega}\right)$ yields:
\beq
\bea{l}
\displaystyle
\lambda_k \boldsymbol{g}_k^T\cdot\left(\boldsymbol{\omega}_k-\boldsymbol{\omega}\right)
\vspace{0.3cm}\\\displaystyle
\leq\frac{1}{2}||\boldsymbol{\omega}_k-\boldsymbol{\omega}||^2
-\frac{1}{2}||\boldsymbol{\omega}_{k+1}-\boldsymbol{\omega}||^2
+\frac{\lambda_k^2}{2}||\boldsymbol{g}_k||^2.
\ena
\eeq
Taking expectation conditioned on $\mathcal{F}_{k-1}$ yields:
\beq
\bea{l}
\displaystyle
\lambda_k \nabla\boldsymbol{f}_k^T\cdot\left(\boldsymbol{\omega}_k-\boldsymbol{\omega}\right)
\vspace{0.3cm}\\\displaystyle
\leq\frac{1}{2}||\boldsymbol{\omega}_k-\boldsymbol{\omega}||^2
-\frac{1}{2}E\left[||\boldsymbol{\omega}_{k+1}-\boldsymbol{\omega}||^2|\mathcal{F}_{k-1}\right] \vspace{0.3cm}\\\displaystyle\hspace{2cm}
+\frac{\lambda_k^2}{2}E\left[||\boldsymbol{g}_k||^2|\mathcal{F}_{k-1}\right].
\ena
\eeq
where we used the fact that $E\left[\boldsymbol{g}_k|\mathcal{F}_{k-1}\right]=\nabla\boldsymbol{f}_k$, and $\boldsymbol{\omega}_k$ is deterministic conditioned on $\mathcal{F}_{k-1}$. Finally, using (\ref{eq:g_k_develop1}) we have $E\left[||\boldsymbol{g}_k||^2|\mathcal{F}_{k-1}\right]\leq
4 E\left[||\boldsymbol{x}_{k}\boldsymbol{x}_{k}^T \boldsymbol{e}_k||^2|\mathcal{F}_{k-1}\right]
+4 \sigma^2 E\left[||\boldsymbol{x}_{k}||^2\right]$. Thus, (\ref{eq:upper_bound_F_k_1}) follows. \vspace{0.3cm}
\end{proof}

Next, Consider a sequence
\beq
\alpha_k=\frac{\gamma}{\gamma+k} , \;\;\; k=0, 1, ... \vspace{0.3cm}
\eeq
%---------------------lemma step size-----------------------
\begin{lemma}
\label{le_step_size}
The sequence $\alpha_k$, with $\gamma\geq 2$ satisfies:
\beq
\bea{l}
\label{eq:le_step_size}
\displaystyle
\alpha_k^2\geq\frac{1}{\sum_{r=0}^{k}1/\alpha_r} \;\;\; \forall k\geq 0. \vspace{0.3cm}
\ena
\eeq
\end{lemma}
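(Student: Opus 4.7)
The plan is to prove the inequality by direct computation, since $\alpha_k$ has a simple closed form. First I would compute the sum $S_k \triangleq \sum_{r=0}^{k} 1/\alpha_r$ explicitly. Using $1/\alpha_r = (\gamma + r)/\gamma$, a standard arithmetic-series calculation gives
\beq
S_k = \frac{1}{\gamma}\sum_{r=0}^{k}(\gamma + r) = \frac{(k+1)(2\gamma + k)}{2\gamma}.
\eeq
Combined with $\alpha_k^2 = \gamma^2/(\gamma+k)^2$, the inequality $\alpha_k^2 \geq 1/S_k$ reduces to the purely polynomial statement
\beq
\gamma (k+1)(2\gamma + k) \geq 2(\gamma + k)^2.
\eeq

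Next, I would expand both sides and collect terms. The difference LHS $-$ RHS works out to
\beq
k\bigl[(\gamma - 2)k + 2\gamma^2 - 3\gamma\bigr],
\eeq
which vanishes at $k=0$, and for $k \geq 1$ is nonnegative iff $(\gamma-2)k \geq \gamma(3 - 2\gamma)$. Since $\gamma \geq 2$, the left-hand side is nonnegative and the right-hand side is at most $2(3-4) = -2 < 0$. Hence the inequality holds termwise for every $k \geq 0$, and the claim follows.

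There is no real obstacle here: once the sum $S_k$ is put in closed form, the remaining step is an elementary polynomial verification whose only subtlety is checking that the sign conditions on the coefficients of $k$ and the constant term are both controlled by the hypothesis $\gamma \geq 2$. In the write-up I would emphasize that the role of the condition $\gamma \geq 2$ is precisely to guarantee both that the coefficient $(\gamma - 2)$ of $k$ is nonnegative and that $2\gamma^2 - 3\gamma \geq 0$, so that the bracketed expression is nonnegative for all $k \geq 0$.
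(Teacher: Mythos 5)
Your proof is correct: the closed form $\sum_{r=0}^{k}1/\alpha_r=\frac{(k+1)(2\gamma+k)}{2\gamma}$ is right, the reduction to $\gamma(k+1)(2\gamma+k)\geq 2(\gamma+k)^2$ is right, and the factorization of the difference as $k\bigl[(\gamma-2)k+2\gamma^2-3\gamma\bigr]$ checks out, with both the coefficient of $k$ and the constant term nonnegative precisely when $\gamma\geq 2$. However, you take a genuinely different route from the paper. The paper never sums the series in closed form; it instead proves the \emph{per-step} inequality $\frac{1}{\alpha_{r+1}^2}-\frac{1}{\alpha_r^2}\leq\frac{1}{\alpha_{r+1}}$, which after substituting $\alpha_r=\gamma/(\gamma+r)$ becomes the quadratic condition $\gamma^2+(r-1)\gamma-2r-1\geq 0$; the relevant root $\gamma(r)=\frac{-r+1+\sqrt{r^2+6r+5}}{2}$ increases monotonically to $2$, so $\gamma\geq 2$ suffices uniformly in $r$, and telescoping (using $\alpha_0=1$) gives the lemma. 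Your argument is more elementary and arguably cleaner for the lemma in isolation — it avoids the root analysis and exhibits the exact polynomial margin. What the paper's route buys is reuse: the per-step inequality, in the equivalent form $(1-\alpha_k)/\alpha_k^2\leq 1/\alpha_{k-1}^2$, is invoked again inside the proof of Theorem \ref{th:main} to pass from (\ref{eq:upper_bound_smoothing2}) to (\ref{eq:upper_bound_smoothing3}), so establishing it as the engine of Lemma \ref{le_step_size} makes that later step free. If your proof replaced the paper's, that per-step inequality would still have to be proved separately for Theorem \ref{th:main}, so in the context of the whole paper your approach saves nothing — but as a standalone verification of the lemma it is perfectly sound.
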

\begin{proof}
Note that it suffices to show that the step size satisfies:
\beq
\label{eq:pr_le_step_size1}
\frac{1}{\alpha_{r+1}^2}-\frac{1}{\alpha_r^2}\leq\frac{1}{\alpha_{r+1}}
\eeq
for $r=0, 1, ...$, since summing (\ref{eq:pr_le_step_size1}) over $r=0, 1, k-1$ yields: $\frac{1}{\alpha_k^2}-\frac{1}{\alpha_0^2}\leq\sum_{r=1}^{k}\frac{1}{\alpha_r}$, which yields (\ref{eq:le_step_size}).
%Variations of condition (\ref{eq:pr_le_step_size1}) have been used in \cite{Nesterov, Tseng, Nedich}.

Next, we show that the step size with $\gamma\geq 2$ satisfies (\ref{eq:pr_le_step_size1}) for $r\geq 0$. Note that  (\ref{eq:pr_le_step_size1}) can be written as $\frac{1-\alpha_{r+1}}{\alpha_{r+1}^2}\leq\frac{1}{\alpha_r^2}$, where substituting $\alpha_r=\frac{\gamma}{\gamma+r}$ in the last inequality yields:
\beq
\displaystyle
\frac{1-\frac{\gamma}{\gamma+r+1}}{\left[\frac{\gamma}{\gamma+r+1}\right]^2}
\leq\frac{1}{\left[\frac{\gamma}{\gamma+r}\right]^2}.
\eeq
After some algebraic manipulations we obtain the following quadratic inequality:
\beq
\label{eq:quadratic}
\displaystyle
\gamma^2+(r-1)\gamma-2r-1\geq 0.
\eeq
The solution for (\ref{eq:quadratic}) yields:
\beq
\displaystyle
\gamma\geq \gamma(r)\triangleq\frac{-r+1+\sqrt{r^2+6r+5}}{2} \;\;\;\forall r\geq 0.
\eeq
Thus, setting $\alpha_r=\frac{\tilde{\gamma}(r)}{\tilde{\gamma}(r)+r}$ with $\tilde{\gamma}(r)\geq\gamma(r)$ satisfies (\ref{eq:le_step_size}) for all $r\geq 0$. Next, it can be verified that $\gamma(r)$ is monotonically increasing for all $r\geq 0$ and has limit $\lim_{r\rightarrow\infty}\gamma(r)=2$. Thus, $\gamma(r)\leq 2$ for all $r\geq 0$. Hence, setting $\gamma\geq 2$ is sufficient to satisfy (\ref{eq:le_step_size}) for all $r\geq 0$. \vspace{0.3cm}
\end{proof}

%---------------------theorem-----------------------
\begin{theorem}
\label{th:main}
Assume that PSGD-WA is implemented, with
\beq
\bea{l}
\label{eq:th_lambda_beta}
\displaystyle\lambda_k=\frac{1}{2\mu}\alpha_k=\frac{1}{2\mu}\frac{\gamma}{\gamma+k}\vspace{0.3cm}\\
\displaystyle\beta_{k,i}=\frac{1/\alpha_i}{\sum_{r=0}^k 1/\alpha_r}, \vspace{0.3cm}
\ena
\eeq
where $\gamma\geq 2$. Then, for all $k\geq 0$ we have:
\beq
\bea{l}
\label{eq:th_mean_squares1}
\displaystyle
E\left[f\left(\bar{\boldsymbol{\omega}}_k\right)\right]-f\left(\boldsymbol{\omega}^*\right)
\vspace{0.3cm}\\\hspace{0.0cm}\displaystyle
\leq
\frac{\left(\log(k+1)+1\right)\gamma^2 E\left[||\boldsymbol{x}_{k}\boldsymbol{x}_{k}^T||^2\right]C^2}{\mu^2(\gamma+k)^2} \vspace{0.3cm}\\\hspace{2cm}\displaystyle
+\frac{(k+1)\gamma^2E\left[||\boldsymbol{x}_{k}||^2\right]\sigma^2}{\mu(\gamma+k)^2}, \;\;\forall \gamma\geq2 \;\forall k\geq 0,
\ena
\eeq
where
\beq
\label{eq:C_def}
C^2\triangleq 4e_{max}d E\left[||\boldsymbol{x}_{k}\boldsymbol{x}_{k}^T||^2\right]+4\sigma^2E\left[||\boldsymbol{x}_{k}||^2\right]. \vspace{0.3cm}\\
 \eeq
\end{theorem}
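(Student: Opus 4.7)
The plan is to combine Lemma 1 with the quadratic structure of the least-squares loss, telescope the per-iteration inequality using the specific step-size/weight pair $\lambda_k=\alpha_k/(2\mu)$ and $\beta_{k,i}\propto 1/\alpha_i$, and then apply Lemma 2 together with Abel (summation-by-parts) to extract the stated convergence rate.

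First I would apply Lemma 1 at $\boldsymbol{\omega}=\boldsymbol{\omega}^*$. Since $f$ is a quadratic with $\nabla f(\boldsymbol{\omega})=2R_x(\boldsymbol{\omega}-\boldsymbol{\omega}^*)$, one has the key identity $\nabla\boldsymbol{f}_k^T\boldsymbol{e}_k=2[f(\boldsymbol{\omega}_k)-f(\boldsymbol{\omega}^*)]$. The conditional noise factor $E[\|\boldsymbol{x}_k\boldsymbol{x}_k^T\boldsymbol{e}_k\|^2\mid\mathcal{F}_{k-1}]$ is bounded using the rank-one factorization $\|\boldsymbol{x}_k\boldsymbol{x}_k^T\boldsymbol{e}_k\|^2=\|\boldsymbol{x}_k\|^2(\boldsymbol{x}_k^T\boldsymbol{e}_k)^2$, a (loose) Cauchy--Schwarz step $(\boldsymbol{x}_k^T\boldsymbol{e}_k)^2\le d\|\boldsymbol{x}_k\|^2\|\boldsymbol{e}_k\|^2$ (producing the $d$-factor appearing in $C^2$), and the set bound $\|\boldsymbol{e}_k\|^2\le e_{max}$ valid from $\boldsymbol{\omega}_k\in\Omega$. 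After taking full expectation, substituting $\lambda_k=\alpha_k/(2\mu)$, and dividing by $\alpha_k$, this yields
\begin{equation*}
\frac{E[f(\boldsymbol{\omega}_k)-f(\boldsymbol{\omega}^*)]}{\alpha_k}\le\frac{\mu(d_k-d_{k+1})}{2\alpha_k^2}+\frac{C^2}{8\mu},
\end{equation*}
with $d_k:=E[\|\boldsymbol{e}_k\|^2]$.

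Next I would sum this inequality over $i=0,\ldots,k$ and invoke Jensen's inequality with the weighted averaging $\bar{\boldsymbol{\omega}}_k=\sum_i\beta_{k,i}\boldsymbol{\omega}_i$ to obtain
\begin{equation*}
E[f(\bar{\boldsymbol{\omega}}_k)]-f(\boldsymbol{\omega}^*)\le\frac{1}{S_k}\left[\frac{\mu}{2}\sum_{i=0}^k\frac{d_i-d_{i+1}}{\alpha_i^2}+\frac{(k+1)C^2}{8\mu}\right],
\end{equation*}
where $S_k=\sum_{r=0}^k 1/\alpha_r$. The variance-like term $(k+1)C^2/(8\mu S_k)$ combined with Lemma 2 ($1/S_k\le\alpha_k^2$) yields the second summand of the theorem. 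For the telescoped quantity $\Sigma:=\sum_{i=0}^k(d_i-d_{i+1})/\alpha_i^2$, I would apply Abel summation and the auxiliary inequality $1/\alpha_i^2-1/\alpha_{i-1}^2\le 1/\alpha_i$ established inside the proof of Lemma 2 for $\gamma\ge 2$, reducing $\Sigma$ to $d_0/\alpha_0^2+\sum_{i=1}^k d_i/\alpha_i$. To avoid the too-loose bound $d_i\le e_{max}$, I would bootstrap a secondary recursion for $d_i$: combining the per-iteration inequality with strong convexity ($E[f(\boldsymbol{\omega}_i)-f(\boldsymbol{\omega}^*)]\ge\mu d_i$) gives $d_{i+1}\le(1-2\alpha_i)d_i+\alpha_i^2 C^2/(4\mu^2)$. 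A second Abel summation on this recursion, using that $1/\alpha_i-1/\alpha_{i-1}=1/\gamma$ is constant, produces $\sum_i d_i=\mathcal{O}(\gamma^2 C^2\log(k+1)/\mu^2)$. This in turn controls $\sum_i d_i/\alpha_i$ and hence $\Sigma$, producing the $(\log(k+1)+1)$ factor; a final substitution $\alpha_k^2\ge 1/S_k$ via Lemma 2 puts the first summand in the form stated in the theorem.

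The main obstacle is the Abel-summation step for $\Sigma$. The crude bound $d_i\le e_{max}$ produces a contribution of constant order $\mu e_{max}/2$ that does not decay with $k$ and is incompatible with the claimed $(\log(k+1)+1)/(\gamma+k)^2$ rate. The bootstrap on the $d_i$-recursion is the essential device for extracting the logarithmic factor, and the threshold $\gamma\ge 2$---required simultaneously for the Lemma 2 auxiliary estimate and for the fixed-point-type closure $\sum_i d_i\le\frac{2\gamma}{2\gamma-1}[\cdots]$ in the bootstrap---is decisive precisely at this juncture.
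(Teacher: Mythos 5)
There is a genuine gap, and it sits exactly where you flag the ``main obstacle.'' The paper's proof keeps the strong-convexity split $\nabla\boldsymbol{f}_k^T\boldsymbol{e}_k\ge f(\boldsymbol{\omega}_k)-f^*+\mu\|\boldsymbol{e}_k\|^2$ rather than the exact quadratic identity $\nabla\boldsymbol{f}_k^T\boldsymbol{e}_k=2(f(\boldsymbol{\omega}_k)-f^*)$; the extra $-\tfrac{\alpha_k}{2}d_k$ turns the coefficient of $d_k$ on the right-hand side into $\tfrac{1-\alpha_k}{2}$, and the inequality $(1-\alpha_k)/\alpha_k^2\le 1/\alpha_{k-1}^2$ then makes the sum telescope \emph{exactly}, with no Abel residual at all. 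By spending all of $\nabla\boldsymbol{f}_k^T\boldsymbol{e}_k$ on the function gap you are forced into the residual $\sum_{i}d_i(1/\alpha_i^2-1/\alpha_{i-1}^2)\le\sum_i d_i/\alpha_i$, and here your bootstrap does not deliver what you claim: even with the correct rate $d_i=O(\gamma^2C^2/(\mu^2 i))$ from the recursion $d_{i+1}\le(1-2\alpha_i)d_i+\alpha_i^2C^2/(4\mu^2)$, the weight $1/\alpha_i=(i+\gamma)/\gamma$ grows linearly, so $\sum_i d_i/\alpha_i=O(k)$, not $O(\log k)$ --- controlling $\sum_i d_i$ does not control $\sum_i d_i/\alpha_i$. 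After dividing by $S_k=\Theta(k^2)$ this residual contributes $O(C^2/k)$, not the claimed $O(C^2\log k/k^2)$.

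The second, related problem is that you bound $E[\|\boldsymbol{x}_k\boldsymbol{x}_k^T\boldsymbol{e}_k\|^2]$ by the crude $\|\boldsymbol{e}_k\|^2\le e_{max}$ and absorb it into the constant $C^2/(8\mu)$ per iteration; summing and normalizing then places the entire $C^2$ (including the $e_{max}d\,E[\|\boldsymbol{x}\boldsymbol{x}^T\|^2]$ part) in a term of order $1/k$. The theorem's second summand contains only $\sigma^2E[\|\boldsymbol{x}\|^2]$ at rate $1/k$; the $e_{max}$-dependent quantity must appear only in the first summand at rate $\log(k+1)/(\gamma+k)^2$. The paper achieves this by inserting the sharper iterate bound $E[\|\boldsymbol{e}_k\|^2]\le C^2/((k+1)\mu)$ (imported from Theorem 1 of Nedi\'c--Lee; your bootstrap recursion is an acceptable substitute for deriving it) \emph{into the gradient-variance term itself}, so that after dividing by $\alpha_k^2$ that term becomes $E[\|\boldsymbol{x}_k\boldsymbol{x}_k^T\|^2]C^2/(2\mu^3(k+1))$ and the sum over $k$ produces the harmonic number $H_{k+1}\le\log(k+1)+1$. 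In short: you have the right ingredient ($d_i=O(1/i)$) but apply it to the wrong term, and you discard the strong-convexity contraction that makes the telescoping clean. As written your argument would at best prove $E[f(\bar{\boldsymbol{\omega}}_k)]-f^*=O(C^2/k)$ with an $e_{max}$-dependent leading constant, which is weaker than the stated theorem.
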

%--------------proof-------------------
\begin{proof}
By Lemma \ref{lemma:upper_bound_F_k_1}, setting $\boldsymbol{\omega}=\boldsymbol{\omega}^*$ in (\ref{eq:upper_bound_F_k_1}) yields:
\beq
\bea{l}
\label{eq:applying_previous_lemma}
\displaystyle\frac{1}{2}E\left[||\boldsymbol{\omega}_{k+1}-\boldsymbol{\omega}^*||^2|\mathcal{F}_{k-1}\right]+\frac{\alpha_k}{2\mu} \nabla\boldsymbol{f}_k^T\cdot\left(\boldsymbol{\omega}_k-\boldsymbol{\omega}^*\right) \vspace{0.3cm}\\\hspace{0.5cm}\displaystyle\leq
\frac{1}{2}||\boldsymbol{\omega}_k-\boldsymbol{\omega}^*||^2+
\frac{\alpha_k^2}{2\mu^2}E\left[||\boldsymbol{x}_{k}\boldsymbol{x}_{k}^T \boldsymbol{e}_k||^2|\mathcal{F}_{k-1}\right] \vspace{0.3cm}\\\hspace{3cm}\displaystyle
+\frac{\alpha_k^2}{2\mu^2}\sigma^2 E\left[||\boldsymbol{x}_{k}||^2\right].
\ena
\eeq
Note that $2\mu$-strong convexity of $f$ implies:
\beq
\label{eq:pr_strong_convexity}
\displaystyle
\nabla\boldsymbol{f}_k^T\cdot\left(\boldsymbol{\omega}_k-\boldsymbol{\omega}^*\right)
\geq f\left(\boldsymbol{\omega}_k\right)-f\left(\boldsymbol{\omega}^*\right)+\mu||\boldsymbol{\omega}_k-\boldsymbol{\omega}^*||^2.
\vspace{0.3cm}
\eeq
Substituting (\ref{eq:pr_strong_convexity}) in (\ref{eq:applying_previous_lemma}) and smoothing yields:
\beq
\bea{l}
\label{eq:upper_bound_smoothing}
\displaystyle\frac{1}{2}E\left[||\boldsymbol{\omega}_{k+1}-\boldsymbol{\omega}^*||^2\right]+\frac{\alpha_k}{2\mu}
\left(E\left[f\left(\boldsymbol{\omega}_k\right)\right]-f\left(\boldsymbol{\omega}^*\right)\right)
\vspace{0.3cm}\\\hspace{0.0cm}\displaystyle
\overset{(a)}\leq
\frac{1}{2}E\left[||\boldsymbol{\omega}_{k+1}-\boldsymbol{\omega}^*||^2\right]
+\frac{\alpha_k}{2\mu}\nabla\boldsymbol{f}_k^T\cdot\left(\boldsymbol{\omega}_k-\boldsymbol{\omega}^*\right)
\vspace{0.3cm}\\\hspace{4cm}\displaystyle
-\frac{\alpha_k}{2}E\left[||\boldsymbol{\omega}_{k}-\boldsymbol{\omega}^*||^2\right]
\vspace{0.3cm}\\\hspace{0.0cm}\displaystyle
\overset{(b)}\leq
\frac{1}{2}E\left[||\boldsymbol{\omega}_k-\boldsymbol{\omega}^*||^2\right]+
\frac{\alpha_k^2}{2\mu^2}E\left[||\boldsymbol{x}_{k}\boldsymbol{x}_{k}^T \boldsymbol{e}_k||^2\right] \vspace{0.3cm}\\\hspace{0.5cm}\displaystyle
+\frac{\alpha_k^2}{2\mu^2}\sigma^2 E\left[||\boldsymbol{x}_{k}||^2\right]
-\frac{\alpha_k}{2}E\left[||\boldsymbol{\omega}_{k}-\boldsymbol{\omega}^*||^2\right]
\vspace{0.3cm}\\\hspace{0.0cm}\displaystyle
=\frac{1-\alpha_k}{2}E\left[||\boldsymbol{\omega}_k-\boldsymbol{\omega}^*||^2\right]+
\frac{\alpha_k^2}{2\mu^2}E\left[||\boldsymbol{x}_{k}\boldsymbol{x}_{k}^T \boldsymbol{e}_k||^2\right] \vspace{0.3cm}\\\hspace{3cm}\displaystyle
+\frac{\alpha_k^2}{2\mu^2}\sigma^2 E\left[||\boldsymbol{x}_{k}||^2\right]
\vspace{0.3cm}\\\hspace{0cm}\displaystyle
\leq\frac{1-\alpha_k}{2}E\left[||\boldsymbol{\omega}_k-\boldsymbol{\omega}^*||^2\right]+
\frac{\alpha_k^2}{2\mu^2}E\left[||\boldsymbol{x}_{k}\boldsymbol{x}_{k}^T||^2\right]E\left[||\boldsymbol{e}_k||^2\right] \vspace{0.3cm}\\\hspace{3cm}\displaystyle
+\frac{\alpha_k^2}{2\mu^2}\sigma^2 E\left[||\boldsymbol{x}_{k}||^2\right].
\ena
\eeq
Inequality $(a)$ follows by (\ref{eq:pr_strong_convexity}), and inequality $(b)$ follows by (\ref{eq:applying_previous_lemma}).
Next, we upper bound $E\left[||\boldsymbol{e}_k||^2\right]$. Note that $E\left[||\boldsymbol{g}_k||^2\right]$ is bounded by $E\left[||\boldsymbol{g}_k||^2\right]=E\left[||2\boldsymbol{x}_{k}\left(\boldsymbol{x}_{k}^T \boldsymbol{e}_k+v_{k}\right)||^2\right]\leq C^2$. Thus, using a similar argument as in \cite[Theorem 1]{nedic2014stochastic} yields:
\beq
\label{eq:upper_bound_err_nedich}
E\left[||\boldsymbol{e}_k||^2\right]\leq\frac{C^2}{(k+1)\mu}.
\eeq
As a result, substituting (\ref{eq:upper_bound_err_nedich}) in (\ref{eq:upper_bound_smoothing}) yields:
\beq
\bea{l}
\label{eq:upper_bound_smoothing2}
\displaystyle\frac{1}{2}E\left[||\boldsymbol{\omega}_{k+1}-\boldsymbol{\omega}^*||^2\right]
+\frac{\alpha_k}{2\mu}\left(E\left[f\left(\boldsymbol{\omega}_k\right)\right]-f\left(\boldsymbol{\omega}^*\right)\right)
\vspace{0.3cm}\\\hspace{0.0cm}\displaystyle
\leq\frac{1-\alpha_k}{2}E\left[||\boldsymbol{\omega}_k-\boldsymbol{\omega}^*||^2\right]+
\frac{\alpha_k^2C^2}{2\mu^3(k+1)}E\left[||\boldsymbol{x}_{k}\boldsymbol{x}_{k}^T||^2\right] \vspace{0.3cm}\\\hspace{3cm}\displaystyle
+\frac{\alpha_k^2\sigma^2}{2\mu^2}E\left[||\boldsymbol{x}_{k}||^2\right]
\ena
\eeq
Next, by dividing both sides of the inequality by $\alpha_k^2$ and using $(1-\alpha_k)/\alpha_k^2\leq1/a_{k-1}^2$ for $k\geq 1$ (see (\ref{eq:pr_le_step_size1}) in the proof of Lemma \ref{le_step_size}) we obtain:
\beq
\bea{l}
\label{eq:upper_bound_smoothing3}
\displaystyle\frac{1}{2\alpha_k^2}E\left[||\boldsymbol{\omega}_{k+1}-\boldsymbol{\omega}^*||^2\right]+\frac{1}{2\alpha_k\mu}
\left(E\left[f\left(\boldsymbol{\omega}_k\right)\right]-f\left(\boldsymbol{\omega}^*\right)\right)
\vspace{0.3cm}\\\hspace{0.0cm}\displaystyle
\leq\frac{1}{2\alpha_{k-1}^2}E\left[||\boldsymbol{\omega}_k-\boldsymbol{\omega}^*||^2\right]+
\frac{C^2}{2\mu^3(k+1)}E\left[||\boldsymbol{x}_{k}\boldsymbol{x}_{k}^T||^2\right] \vspace{0.3cm}\\\hspace{3cm}\displaystyle
+\frac{\sigma^2}{2\mu^2}E\left[||\boldsymbol{x}_{k}||^2\right]
\ena
\eeq
Next, summing (\ref{eq:upper_bound_smoothing3}) over $1, 2, ..., k$ yields:
\beq
\bea{l}
\label{eq:upper_bound_summing}
\displaystyle\frac{1}{2\alpha_k^2}E\left[||\boldsymbol{\omega}_{k+1}-\boldsymbol{\omega}^*||^2\right]
+\frac{1}{2\mu}\sum_{r=1}^{k}\frac{1}{\alpha_r}\left(E\left[f\left(\boldsymbol{\omega}_k\right)\right]-f\left(\boldsymbol{\omega}^*\right)\right)
\vspace{0.3cm}\\\hspace{0.0cm}\displaystyle
\leq\frac{1}{2}E\left[||\boldsymbol{\omega}_1-\boldsymbol{\omega}^*||^2\right]+
\frac{(H_{k+1}-1)C^2}{2\mu^3}E\left[||\boldsymbol{x}_{k}\boldsymbol{x}_{k}^T||^2\right] \vspace{0.3cm}\\\hspace{3cm}\displaystyle
+k\frac{\sigma^2}{2\mu^2}E\left[||\boldsymbol{x}_{k}||^2\right]
\vspace{0.3cm}\\\hspace{0.0cm}\displaystyle
\leq\frac{1}{2}E\left[||\boldsymbol{\omega}_1-\boldsymbol{\omega}^*||^2\right]+
\frac{\log(k+1)C^2}{2\mu^3}E\left[||\boldsymbol{x}_{k}\boldsymbol{x}_{k}^T||^2\right] \vspace{0.3cm}\\\hspace{3cm}\displaystyle
+\frac{k\sigma^2}{2\mu^2}E\left[||\boldsymbol{x}_{k}||^2\right],
\ena
\eeq
where $H_k$ is the $k^{th}$ harmonic number.

Computing the term for $k=0$ is obtained by substituting $k=0$ in (\ref{eq:upper_bound_smoothing2}):
\beq
\bea{l}
\label{eq:upper_bound_summing_k_0}
\displaystyle\frac{1}{2}E\left[||\boldsymbol{\omega}_{1}-\boldsymbol{\omega}^*||^2\right]
+\frac{1}{2\mu}\left(E\left[f\left(\boldsymbol{\omega}_0\right)\right]-f\left(\boldsymbol{\omega}^*\right)\right)
\vspace{0.3cm}\\\hspace{0.0cm}\displaystyle
\leq
\frac{C^2}{2\mu^3}E\left[||\boldsymbol{x}_{k}\boldsymbol{x}_{k}^T||^2\right] +\frac{\sigma^2}{2\mu^2}E\left[||\boldsymbol{x}_{k}||^2\right].
\ena
\eeq
As a result, by combining (\ref{eq:upper_bound_summing}) and (\ref{eq:upper_bound_summing_k_0}) we obtain for all $k\geq 0$:
\beq
\bea{l}
\label{eq:upper_bound1}
\displaystyle\frac{1}{2\alpha_k^2}E\left[||\boldsymbol{\omega}_{k+1}-\boldsymbol{\omega}^*||^2\right]
+\frac{1}{2\mu}\sum_{r=0}^{k}\frac{1}{\alpha_r}\left(E\left[f\left(\boldsymbol{\omega}_k\right)\right]-f\left(\boldsymbol{\omega}^*\right)\right)
\vspace{0.3cm}\\\hspace{0.0cm}\displaystyle
\leq
\frac{\left(\log(k+1)+1\right)C^2}{2\mu^3}E\left[||\boldsymbol{x}_{k}\boldsymbol{x}_{k}^T||^2\right] \vspace{0.3cm}\\\hspace{4cm}\displaystyle
+\frac{(k+1)\sigma^2}{2\mu^2}E\left[||\boldsymbol{x}_{k}||^2\right],
\ena
\eeq
Multiplying by $2\mu\alpha_k^2$ and rearranging terms yields:
\beq
\bea{l}
\label{eq:upper_bound2}
\displaystyle \alpha_k^2\sum_{r=0}^{k}\frac{1}{\alpha_r}\left(E\left[f\left(\boldsymbol{\omega}_k\right)\right]-f\left(\boldsymbol{\omega}^*\right)\right)
\vspace{0.3cm}\\\hspace{0.0cm}\displaystyle
\leq
\frac{\left(\log(k+1)+1\right)\alpha_k^2 C^2}{\mu^2}E\left[||\boldsymbol{x}_{k}\boldsymbol{x}_{k}^T||^2\right] \vspace{0.3cm}\\\hspace{0.5cm}\displaystyle
+\frac{(k+1)\alpha_k^2\sigma^2}{\mu}E\left[||\boldsymbol{x}_{k}||^2\right]
%\vspace{0.3cm}\\\hspace{4cm}\displaystyle
-2\mu E\left[||\boldsymbol{\omega}_{k+1}-\boldsymbol{\omega}^*||^2\right]
\vspace{0.3cm}\\\hspace{0.0cm}\displaystyle
\leq
\frac{\left(\log(k+1)+1\right)\alpha_k^2 C^2}{\mu^2}E\left[||\boldsymbol{x}_{k}\boldsymbol{x}_{k}^T||^2\right] \vspace{0.3cm}\\\hspace{3cm}\displaystyle
+\frac{(k+1)\alpha_k^2\sigma^2}{\mu}E\left[||\boldsymbol{x}_{k}||^2\right].
\ena
\eeq
Next, we use (\ref{eq:le_step_size}) in Lemma \ref{le_step_size} to get:
\beq
\bea{l}
\label{eq:upper_bound3}
\displaystyle
\frac{1}{\sum_{r=0}^{k}\frac{1}{\alpha_r}}\sum_{r=0}^{k}
\frac{1}{\alpha_r}\left(E\left[f\left(\boldsymbol{\omega}_k\right)\right]-f\left(\boldsymbol{\omega}^*\right)\right)
\vspace{0.3cm}\\\hspace{0.0cm}\displaystyle
\leq
\frac{\left(\log(k+1)+1\right)\alpha_k^2 C^2}{\mu^2}E\left[||\boldsymbol{x}_{k}\boldsymbol{x}_{k}^T||^2\right] \vspace{0.3cm}\\\hspace{3cm}\displaystyle
+\frac{(k+1)\alpha_k^2\sigma^2}{\mu}E\left[||\boldsymbol{x}_{k}||^2\right].
\ena
\eeq
Recall that $\alpha_k=\frac{\gamma}{\gamma+k}$, where $\gamma\geq 2$. Hence,
\beq
\bea{l}
\label{eq:upper_bound4}
\displaystyle
\frac{1}{\sum_{r=0}^{k}\frac{1}{\alpha_r}}\sum_{r=0}^{k}
\frac{1}{\alpha_r}\left(E\left[f\left(\boldsymbol{\omega}_k\right)\right]-f\left(\boldsymbol{\omega}^*\right)\right)
\vspace{0.3cm}\\\hspace{0.0cm}\displaystyle
\leq
\frac{\left(\log(k+1)+1\right)\gamma^2 C^2}{\mu^2(\gamma+k)^2}E\left[||\boldsymbol{x}_{k}\boldsymbol{x}_{k}^T||^2\right] \vspace{0.3cm}\\\hspace{3cm}\displaystyle
+\frac{(k+1)\gamma^2\sigma^2}{\mu(\gamma+k)^2}E\left[||\boldsymbol{x}_{k}||^2\right].
\ena
\eeq
Next, using the convexity of $f$ we have:
\beq
\bea{l}
\label{eq:upper_bound5}
\displaystyle
E\left[f\left(\bar{\boldsymbol{\omega}}_k\right)\right]-f\left(\boldsymbol{\omega}^*\right)
\vspace{0.3cm}\\\hspace{0.0cm}\displaystyle
\leq
\frac{\left(\log(k+1)+1\right)\gamma^2 C^2}{\mu^2(\gamma+k)^2}E\left[||\boldsymbol{x}_{k}\boldsymbol{x}_{k}^T||^2\right] \vspace{0.3cm}\\\hspace{2cm}\displaystyle
+\frac{(k+1)\gamma^2\sigma^2}{\mu(\gamma+k)^2}E\left[||\boldsymbol{x}_{k}||^2\right], \;\;\forall \gamma\geq2 \;\forall k\geq 0. \vspace{0.3cm}
\ena
\eeq
\end{proof}

%---------------------remark-----------------------
\begin{remark}
From Theorem \ref{th:main}, we obtain an explicit $O(1/k)$ upper bound on the convergence rate, depending on the noise variance (second term on the RHS of (\ref{eq:th_mean_squares1})) and the size of the constraint set (first term on the RHS of (\ref{eq:th_mean_squares1})). The variance term dominates the error and decreases with rate $1/k$, while the other term (which is related to the diameter $e_{max}$ of the constraint set) decreases faster at rate $\log(k)/k^2$. The best asymptotic (as $k$ increases) bound is obtained by setting $\gamma=2$. \vspace{0.3cm}
\end{remark}

\begin{remark}
Note that when the random components of $\boldsymbol{x}$ are identically distributed and uncorrelated (thus, the correlation matrix of $\boldsymbol{x}_{k}$ can be written as $E\left[\boldsymbol{x}\boldsymbol{x}^T\right]=\mu I_d$, where $I_d$ is the identity matrix and its minimal eigenvalue is $\mu$) we obtain: $E\left[||\boldsymbol{x}||^2\right]=d\mu$. As a result, we have $\lim_{k\rightarrow\infty}k\left(E\left[f\left(\bar{\boldsymbol{\omega}}_k\right)\right]-f\left(\boldsymbol{\omega}^*\right)\right)\leq 4d\sigma^2$, where $\lim_{k\rightarrow\infty}k\left(E\left[f\left(\boldsymbol{\omega}^{ERM}_k\right)\right]-f\left(\boldsymbol{\omega}^*\right)\right)=d\sigma^2$ under the ERM scheme. Hence, the asymptotic ratio $\rho$ between the convergence rate of our scheme and the ERM scheme is upper bounded by $\rho\leq 4$ as the number of iterations approaches infinity.
\vspace{0.3cm}
\end{remark}

\begin{remark}
The streaming SVRG algorithm proposed in \cite{frostig2014competing} for a general strongly convex regression problem achieves $\rho=1$ asymptotically with the price of geometrically increasing batch sample size occasionally and setting the constant step size close to zero, which deteriorates performance in the finite regime. The SGD with averaging and constant step size scheme proposed in \cite{defossez2015averaged} for a linear least squares regression problem requires a fixed batch sample size as required by PSGD-WA. However, obtaining $\rho=1$ asymptotically requires to set the constant step size close to zero, which deteriorates performance in the finite regime (due to a term that depends on $1/\zeta^2$ and blows up as the constant step size $\zeta$ approaches zero \cite{defossez2015averaged}). Controlling the decay step sizes, however, as suggested by PSGD-WA avoids that blowing up term. Theorem \ref{th:main} shows that PSGD-WA achieves $\rho\leq 4$, where the step sizes can be large in the beginning of the algorithm and approach zero only asymptotically. This insight is demonstrated by numerical experiments in Section \ref{sec:simulation}, where significant performance gain is demonstrated by PSGD-WA in the finite regime, while unweighted averaging is expected to perform well as the number of iterations becomes very large.
\end{remark}

\subsection{A case of $d=1$}
\label{sec:unconstrained}

For purposes of analysis whether further improvement in the resulting error can be expected, we provide a better bound for the error when $d=1$, and $\Omega$ is unbounded.

Let
\beq
\label{eq:step_size_unconstrained}
\displaystyle\lambda_k=\frac{1}{2x_k^2}\alpha_k=\frac{1}{2x_k^2}\frac{\gamma}{\gamma+k}.
\eeq
Note that when $x_k=x$ for all $k$, then $\mu=x^2$, and $\lambda_k=\frac{1}{2x^2}\alpha_k=\frac{1}{2x^2}\frac{\gamma}{\gamma+k}$, which is a special case of the step size in (\ref{eq:th_lambda_beta}) when $d=1$.

Since $\Omega$ is unbounded, the proposed PSGD-WA algorithm updates the estimate $\omega_{k+1}$ using a SGD update and compute a weighted average over iterates as byproduct of the algorithm. Specifically, at iteration $k+1$ we compute the estimate of $\omega^*$ as follows:
\beq
\label{eq:SGD_WA1_unconstrained}
\bea{l}
\displaystyle\omega_{k+1}=\omega_{k}-\alpha_k\left(\omega_{k}-y_{k}/x_k\right),
\ena
\eeq
where $\alpha_k=\frac{\gamma}{\gamma+k}$.
In addition to the estimate $\omega_{k+1}$, we compute the weighted average estimate as in (\ref{eq:PSGD_WA2}).

Let
\beq
\bea{l}
\displaystyle\eta_k\triangleq\omega_k-\omega^*, \vspace{0.3cm}\\
\displaystyle\bar{\eta}_k\triangleq\bar{\omega}_k-\omega^*=\sum_{i=0}^{k}\beta_{k,i}\eta_i.
\ena
\eeq
where the last equality holds since $\sum_{i=0}^{k}\beta_{k,i}=1$ for all $k\geq 0$.

We define:
\beq
\bea{l}
\displaystyle\widetilde{M}_{i,j}\triangleq\prod_{r=i+1}^{j}\left(1-\alpha_{r-1}\right),
\ena
\eeq
and
\beq
\bea{l}
\displaystyle M_{i,j}\triangleq\beta_j\widetilde{M}_{i,j}.
\ena
\eeq

For the ease of presentation we also set\footnote{It should be noted that a similar asymptotic result in this section is obtained by setting $\beta_{k,i}$ as in (\ref{eq:th_lambda_beta})}:
\beq
\bea{l}
\label{eq:conditions}
%\displaystyle\lambda_k=\frac{\gamma}{\gamma+k} \; \forall k\geq 0,\vspace{0.3cm}\\
\displaystyle\beta_0=1\;,\;\beta_k=\frac{1}{\alpha_{k-1}}=\frac{\gamma+k-1}{\gamma}\;\forall k\geq 1\vspace{0.3cm}\\
\displaystyle\beta_{k,i}=\frac{\beta_i}{\sum_{r=0}^k \beta_r} \; \forall k\geq 0.
\ena
\eeq
where $\gamma\geq 1$.

%------------lemma------------
\begin{lemma}
\label{lemma:sum_M}
Assume that (\ref{eq:conditions}) holds. Then,
\beq
\label{eq:sum_M}
\displaystyle\sum_{j=i+1}^{k}M_{i+1,j}\leq\frac{(i+\gamma)(k-i)}{\gamma}.
\eeq
\end{lemma}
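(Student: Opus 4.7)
\noindent\textbf{Plan for the proof of Lemma \ref{lemma:sum_M}.}

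The plan is to show that each summand $M_{i+1,j}$ is itself bounded by $(i+\gamma)/\gamma$, so that summing $(k-i)$ such terms yields the desired bound. First, I would rewrite $1-\alpha_{r-1} = (r-1)/(\gamma+r-1)$, which turns the product in $\widetilde{M}_{i+1,j}$ into a ratio of Pochhammer-style products:
\beq
\label{eq:plan_Mtilde}
\widetilde{M}_{i+1,j} \;=\; \prod_{r=i+2}^{j}\frac{r-1}{\gamma+r-1}
\;=\; \frac{(i+1)(i+2)\cdots(j-1)}{(\gamma+i+1)(\gamma+i+2)\cdots(\gamma+j-1)}.
\eeq
Multiplying by $\beta_j=(\gamma+j-1)/\gamma$ collapses the last denominator factor and yields
\beq
M_{i+1,j} \;=\; \frac{(i+1)(i+2)\cdots(j-1)}{\gamma\,(\gamma+i+1)\cdots(\gamma+j-2)},
\eeq
with the convention that the empty product at $j=i+1$ gives $M_{i+1,i+1}=(i+\gamma)/\gamma$.

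Next, I would study the ratio of consecutive terms. A direct computation gives
\beq
\frac{M_{i+1,j+1}}{M_{i+1,j}} \;=\; \frac{j}{\gamma+j-1},
\eeq
which is at most $1$ precisely because $\gamma\geq 1$. Hence the sequence $\{M_{i+1,j}\}_{j\geq i+1}$ is non-increasing in $j$, so it attains its maximum at $j=i+1$, where its value is $(i+\gamma)/\gamma$.

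Summing this uniform bound over the $k-i$ indices $j=i+1,\ldots,k$ produces
\beq
\sum_{j=i+1}^{k} M_{i+1,j} \;\leq\; (k-i)\cdot\frac{i+\gamma}{\gamma},
\eeq
as claimed. The argument is largely mechanical; the only subtleties are (i) handling the empty-product edge case $j=i+1$ correctly so that $M_{i+1,i+1}$ equals $\beta_{i+1}$, and (ii) observing that the condition $\gamma\geq 1$ imposed in (\ref{eq:conditions}) is exactly what makes the monotonicity step work. I do not foresee any serious obstacle beyond bookkeeping with the indices.
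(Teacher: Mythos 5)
Your proposal is correct, and it shares the paper's overall strategy: both arguments reduce the lemma to the uniform bound $M_{i+1,j}\leq\frac{i+\gamma}{\gamma}$ for every $j\in\{i+1,\dots,k\}$ and then multiply by the number of summands $k-i$. Where you differ is in how that uniform bound is established. The paper bounds each product $M_{i+1,j}=\frac{\gamma+j-1}{\gamma}\prod_{r=i+2}^{j}\frac{r-1}{\gamma+r-1}$ directly by regrouping and re-pairing numerator and denominator factors (the manipulation involving $\lfloor\gamma\rfloor$), which is somewhat opaque and delicate for non-integer $\gamma$. You instead observe that consecutive terms satisfy $\frac{M_{i+1,j+1}}{M_{i+1,j}}=\frac{j}{\gamma+j-1}\leq 1$ exactly when $\gamma\geq 1$, so the sequence is non-increasing in $j$ and is maximized at $j=i+1$, where $M_{i+1,i+1}=\beta_{i+1}=\frac{i+\gamma}{\gamma}$. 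Your route buys transparency: it isolates precisely where the hypothesis $\gamma\geq 1$ enters, avoids the factor-shuffling entirely, and handles the edge case $j=i+1$ explicitly (the empty product, where the ``collapsed'' closed form for $M_{i+1,j}$ is not yet valid). The two proofs yield the identical bound, so nothing is lost; your verification of the key step is simply cleaner bookkeeping.
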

%-----------proof--------------
\begin{proof}
Since $1-\alpha_i=1-\frac{\gamma}{\gamma+i}=\frac{i}{\gamma+i}$, we can rewrite $M_{i+1,j}$ as:
\beq
\bea{l}
\displaystyle M_{i+1,j}=\frac{\gamma+j-1}{\gamma}\prod_{r=i+2}^{j}\frac{r-1}{\gamma+r-1}\;\forall j\geq i+1.
\ena
\eeq
As a result, we obtain:
\beq
\bea{l}
\displaystyle M_{i+1,j}=\frac{\gamma+j-1}{\gamma}\times\vspace{0.3cm}\\\hspace{0.3cm}
\displaystyle\left[\frac{i+1}{\gamma+i+1}\cdot\frac{i+2}{\gamma+i+2}
                   \cdots\frac{j-2}{\gamma+j-2}\cdot\frac{j-1}{\gamma+j-1}\right]\vspace{0.3cm}\\\hspace{0.3cm}
\displaystyle\leq\frac{\gamma+j-1}{\gamma}\times\vspace{0.3cm}\\\hspace{0.3cm}
\displaystyle\left[(i+1)\cdots(i+\lfloor\gamma\rfloor)\frac{i+\lfloor\gamma\rfloor+1}{\gamma+i+1}\cdots\frac{j-1}{j-1+\gamma-\lfloor\gamma\rfloor} \times\right.\vspace{0.3cm}\\\hspace{0.3cm}
\displaystyle\left.
\frac{1}{j+\gamma-\lfloor\gamma\rfloor}\cdots\frac{1}{j+\gamma-1}\right]
\vspace{0.3cm}\\\hspace{0.3cm}
\displaystyle\leq\frac{\gamma+j-1}{\gamma}\left[\frac{i+1}{j+\gamma-\lfloor\gamma\rfloor}\cdots\frac{i+\lfloor\gamma\rfloor}{j+\gamma-1}\right]
\vspace{0.3cm}\\\hspace{0.3cm}
\displaystyle\leq\frac{\gamma+j-1}{\gamma}
                    \left[\frac{i+\lfloor\gamma\rfloor}{j+\gamma-1}\right]\leq\frac{i+\gamma}{\gamma}.
\ena
\eeq
Thus, summing over $j$ yields (\ref{eq:sum_M}).
\end{proof}

%---------------------theorem-----------------------
\begin{theorem}
\label{th:main_unconstrained}
Assume that PSGD-WA is implemented, where the parameters satisfies (\ref{eq:conditions}). Then, \vspace{0.3cm}\\
a) for all $k\geq 0$ we have:
\beq
\bea{l}
\label{eq:th_mean_squares2}
\displaystyle
E\left[f\left(\bar{\omega}_k\right)\right]-f\left(\omega^*\right)
\leq
\frac{4\gamma^2\sigma^2E[x^2]E[1/x^2]}{3k}+O\left(k^{-2}\right).
\ena
\eeq
b) In addition, if $x_k=x$ for all $k$ and $\gamma=1$, we have:
\beq
\bea{l}
\label{eq:th_mean_squares2}
\displaystyle
\lim_{k\rightarrow\infty}k\left(E\left[f\left(\bar{\omega}_k\right)\right]-f\left(\omega^*\right)\right)\leq \frac{4}{3}\sigma^2. \vspace{0.3cm}
\ena
\eeq
\end{theorem}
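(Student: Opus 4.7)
The plan is to exploit the linear structure of the scalar least-squares problem. Using $y_k = x_k\omega^* - v_k$ in (\ref{eq:SGD_WA1_unconstrained}) yields the closed-form error recursion
\begin{equation*}
\eta_{k+1} = (1-\alpha_k)\eta_k - \alpha_k \frac{v_k}{x_k}.
\end{equation*}
A key feature of the step-size family is that $\alpha_0 = \gamma/\gamma = 1$, so $\widetilde{M}_{0,i} = 0$ for every $i\geq 1$; the initial condition $\eta_0$ therefore enters $\bar\eta_k$ only through the single coefficient $\beta_{k,0} = 1/S_k$, where $S_k := \sum_{r=0}^k \beta_r$. For $i\geq 1$ I would unroll the recursion to obtain $\eta_i = -\sum_{j=0}^{i-1} \widetilde{M}_{j+1,i}\,\alpha_j v_j / x_j$, and then swap the order of summation in $\bar\eta_k = \sum_{i=0}^k \beta_{k,i}\eta_i$, which gives
\begin{equation*}
\bar\eta_k = \frac{\eta_0}{S_k} - \sum_{j=0}^{k-1} \alpha_j\, T_{k,j}\, \frac{v_j}{x_j}, \qquad T_{k,j} := \frac{1}{S_k}\sum_{i=j+1}^k M_{j+1,i}.
\end{equation*}

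Next I would use the identity $f(\omega) - f(\omega^*) = E[x^2](\omega - \omega^*)^2$, valid because $E[xv]=0$, to reduce the task to bounding $E[\bar\eta_k^2]$. Since the samples $v_j$ are zero-mean, variance $\sigma^2$, independent across time, independent of the $x_j$, and independent of $\eta_0$, all cross terms vanish and
\begin{equation*}
E[\bar\eta_k^2] = \frac{E[\eta_0^2]}{S_k^2} + \sigma^2 \sum_{j=0}^{k-1} \alpha_j^2\, T_{k,j}^2\, E\!\left[1/x_j^2\right].
\end{equation*}
Invoking Lemma~\ref{lemma:sum_M} gives $T_{k,j} \leq (j+\gamma)(k-j)/(\gamma S_k)$, which combines with $\alpha_j = \gamma/(j+\gamma)$ to produce the clean cancellation $\alpha_j T_{k,j} \leq (k-j)/S_k$, and therefore
\begin{equation*}
\sum_{j=0}^{k-1} \alpha_j^2\, T_{k,j}^2 \leq \frac{1}{S_k^2}\sum_{j=0}^{k-1}(k-j)^2 = \frac{k(k+1)(2k+1)}{6 S_k^2}.
\end{equation*}

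Plugging in the exact form $S_k = 1 + k(2\gamma + k - 1)/(2\gamma)$, so that $S_k^2 = k^4/(4\gamma^2) + O(k^3)$, reduces the sum to $(4\gamma^2/3k)(1 + O(1/k))$; multiplying by $\sigma^2 E[x^2] E[1/x^2]$ and absorbing the transient $E[\eta_0^2]/S_k^2 = O(k^{-4})$ into the $O(k^{-2})$ remainder establishes part (a). Part (b) follows by specialization: setting $x_k \equiv x$ collapses $E[x^2]E[1/x^2]$ to $1$, choosing $\gamma = 1$ fixes the leading constant at $4/3$, and taking $k\to\infty$ in the bound from (a) yields the stated limit. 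The main obstacle I anticipate is preserving the sharp constant $4/3$: a cruder estimate on either $T_{k,j}$ or $S_k$ would inflate it, so the argument must rely on Lemma~\ref{lemma:sum_M} together with the exact asymptotics $\sum_{j=0}^{k-1}(k-j)^2 \sim k^3/3$ and $S_k^2 \sim k^4/(4\gamma^2)$ matching at leading order.
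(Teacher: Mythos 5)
Your proposal is correct and follows essentially the same route as the paper's proof: the same error recursion and unrolling via $\widetilde{M}_{i+1,j}$, the same variance decomposition with vanishing cross terms, and the same combination of Lemma~\ref{lemma:sum_M} with $\alpha_j=\gamma/(\gamma+j)$ and the asymptotics $\sum_{j=0}^{k-1}(k-j)^2\sim k^3/3$, $S_k^2\sim k^4/(4\gamma^2)$ to land the constant $4\gamma^2/3$. Your explicit tracking of the initial-condition term $\eta_0/S_k$ (which the paper silently drops, justified by $\alpha_0=1$ annihilating $\widetilde{M}_{0,j}$ for $j\geq 1$) is a minor but welcome refinement.
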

%--------------proof-------------------
\begin{proof}
Since we consider a least squares loss, we have:
\beq
\bea{l}
\displaystyle
E\left[f\left(\bar{\omega}_k\right)\right]-f\left(\omega^*\right)
=E[x^2]E\left[\bar{\eta}_k^2\right],
\ena
\eeq

Next, we compte $\bar{\eta}_k$. Note that $\eta_i$ can be written recursively as follows:
\beq
\bea{l}
\displaystyle\eta_i=\omega_i-\omega^*\vspace{0.3cm}\\
\displaystyle=\omega_{i-1}-\alpha_{i-1}(\omega_{i-1}-y_{i-1}/x_{i-1})-\omega^*\vspace{0.3cm}\\
\displaystyle=(1-\alpha_{i-1})\eta_{i-1}+\alpha_{i-1}v_{i-1}/x_{i-1},
\ena
\eeq
and by iterating over $\eta_i$ we obtain:
\beq
\bea{l}
\displaystyle\eta_k=\sum_{i=0}^{k-1}\widetilde{M}_{i+1,k}\alpha_iv_i/x_i.
\ena
\eeq
Hence,
\beq
\bea{l}
\displaystyle\bar{\eta}_k
=\frac{1}{\sum_{r=0}^{k}\beta_r}\sum_{j=0}^{k}\sum_{i=0}^{j-1}\beta_j\widetilde{M}_{i+1,j}\alpha_iv_i/x_i \vspace{0.3cm}\\
\displaystyle=\frac{1}{\sum_{r=0}^{k}\beta_r}\sum_{i=0}^{k-1}\left(\sum_{j=i+1}^{k}M_{i+1,j}\right)\alpha_iv_i/x_i.
\ena
\eeq

Next, we compute $E\left[\bar{\eta}_k^2\right]$. Note that:
\beq
\bea{l}
\displaystyle E\left[\bar{\eta}_k^2\right]
=\frac{1}{\left(\sum_{r=0}^{k}\beta_r\right)^2}\times\vspace{0.3cm}\\
\displaystyle E\left[\sum_{i=0}^{k-1}\left(\sum_{j=i+1}^{k}M_{i+1,j}\right)\alpha_i\frac{v_i}{x_i}
       \sum_{\ell=0}^{k-1}\left(\sum_{p=\ell+1}^{k}M_{\ell+1,p}\right)\alpha_{\ell}\frac{v_{\ell}}{x_{\ell}}\right].
\ena
\eeq
Since cross terms are canceled (due to independence across time), we obtain:
\beq
\bea{l}
\displaystyle E\left[\bar{\eta}_k^2\right]%\vspace{0.3cm}\\\hspace{0.5cm}
\displaystyle=\frac{\sigma^2E[1/x^2]}{\left(\sum_{r=0}^{k}\beta_r\right)^2}%\times\vspace{0.3cm}\\
\displaystyle \sum_{i=0}^{k-1}\left(\sum_{j=i+1}^{k}M_{i+1,j}\right)^2\alpha_i^2.
\ena
\eeq

Setting $\beta_r$ according to (\ref{eq:conditions}) yields:
\beq
\label{eq:sum_beta}
\left(\sum_{r=0}^{k}\beta_r\right)^2=\frac{k^4}{4}+O(k^3).
\eeq

Next, applying Lemma \ref{lemma:sum_M} and setting $\alpha_i=\frac{\gamma}{\gamma+i}$ yields:
\beq
\bea{l}
\displaystyle E\left[\bar{\eta}_k^2\right]%\vspace{0.3cm}\\\hspace{0.5cm}
\displaystyle\leq\frac{\gamma^2\sigma^2E[1/x^2]}{k^4/4+O(k^3)}%\times\vspace{0.3cm}\\
\displaystyle \sum_{i=0}^{k-1}\left(k-i\right)^2.
\ena
\eeq
Finally, since $\sum_{i=0}^{k-1}\left(k-i\right)^2=k^3/3+O(k^2)$, (\ref{eq:th_mean_squares2}) follows. Setting $\gamma=1$, $x_k=x$ for all $k$, and letting $k\rightarrow\infty$ yields (\ref{eq:th_mean_squares2}). \vspace{0.3cm}
\end{proof}

\begin{remark}
Note that when the conditions in Theorem \ref{th:main_unconstrained}.b hold, then the asymptotic ratio $\rho$ between the convergence rate of PSGD-WA and the ERM scheme is upper bounded by $\rho\leq 4/3$ as the number of iterations approaches infinity. Thus, the upper bound on the error is better then $\rho\leq 4$ obtained in Theorem \ref{th:main}. Simulation results demonstrate $\rho\leq 4/3$ even for large $d$ in practice.
\end{remark}

\section{Numerical Examples}
\label{sec:simulation}
In this section, we provide numerical examples to illustrate the performance of the algorithms. We have performed experiments on synthetic as well as real date set.

\subsection{Experiments Over Synthetic Data}
\label{ssec:synthetic}

In this section we examined the performance of the algorithms over synthetic data. We set the following parameters (very similar to the experiment setup in \cite{defossez2015averaged}): $d=25$, the streaming data $\boldsymbol{x}_k\in\mathbb{R}^{25}$ are i.i.d r.v. drawn from a normal distribution with covariance matrix $I_d$, and $y_k=\boldsymbol{x}_k^T\omega^*+v_k$, where $v_k\sim N(0,\sigma^2)$ is an additive Gaussian noise. $\boldsymbol{\omega}^*=[1 \; 2 \;...\; 25]^T$ is the unknown parameter. The constraint set for the projected SGD iterates was set to $\boldsymbol{\omega}^*\pm 100$.

We compared three streaming algorithms that require a very similar computational complexity and tuned their parameters: i) a standard Projected SGD with decreasing step size $10/(10+k)$, referred to as PSGD; ii) a Projected SGD using a constant step size $0.002$ with Averaging, referred to as PSGD-A (i.e., a projected version of the algorithm proposed in \cite{defossez2015averaged}); iii) the proposed Projected SGD algorithm with decreasing step size $10/(10+k)$ and Weighted Averaging (PSGD-WA). We performed $1000$ Monte-Carlo experiments to compute the average performance. As a benchmark, we computed the empirical risk minimizer (ERM), which solves (\ref{eq:opt_LS_intro}) directly by using the \emph{entire data} at each iteration.

First, we set $\sigma^2=0.1$. The performance of the algorithms are presented in Fig. \ref{fig:fig1}. It can be seen that the proposed PSGD-WA algorithm performs the best among the streaming algorithms and obtains performance close to the ERM algorithm for a large range of tested $k$. The ratio between the errors under PSGD-WA algorithm and the ERM schemes was less than $1.335$ for all $k> 2\cdot 10^4$ and equals $1.31$ for $k=10^5$. These results coincide with the upper bound $\rho\leq 4/3$ obtained in Theorem \ref{th:main_unconstrained} under $d=1$. However, showing $\rho\leq 4/3$ theoretically for $d>1$ remains open. It can also be seen that PSGD-A has the largest decreasing rate, thus, expected to perform well for very large $k$. These results confirm the advantages of the proposed PSGD-WA algorithm in the finite sample regime, as well as demonstrating its nice asymptotic property (up to a constant ratio between the asymptotic errors under PSGD-WA algorithm and the ERM schemes).

Next, we set $\sigma^2=1$. The performance of the algorithms are presented in Fig. \ref{fig:fig2}. It can be seen that the proposed PSGD-WA algorithm performs the best among the streaming algorithms and obtains performance close to the ERM algorithm for all tested $k$. The ratio between the errors under PSGD-WA and the ERM schemes was less than $1.332$ for all $k> 2\cdot 10^4$ and equals $1.29$ for $k=10^5$. Again, the results coincide with the upper bound $\rho\leq 4/3$ obtained in Theorem \ref{th:main_unconstrained} under $d=1$. It can also be seen that PSGD-A outperforms the standard PSGD for $k> 6\cdot 10^4$, and has the largest decreasing rate, thus, expected to perform well for very large $k$. Again, the results confirm the advantages of the proposed PSGD-WA algorithm in the finite sample regime, as well as demonstrating its nice asymptotic property. It should be noted that similar results have been observed under many different scenarios on the synthetic data.

\begin{figure}[htbp]
\centering \epsfig{file=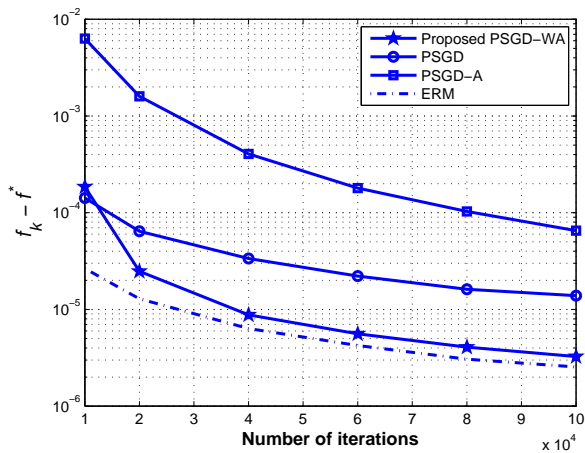,
width=0.51\textwidth}
\caption{The error as a function of the number of iterations under various PSGD algorithms as described in Sec. \ref{ssec:synthetic}.}
\label{fig:fig1}
\end{figure}

\begin{figure}[htbp]
\centering \epsfig{file=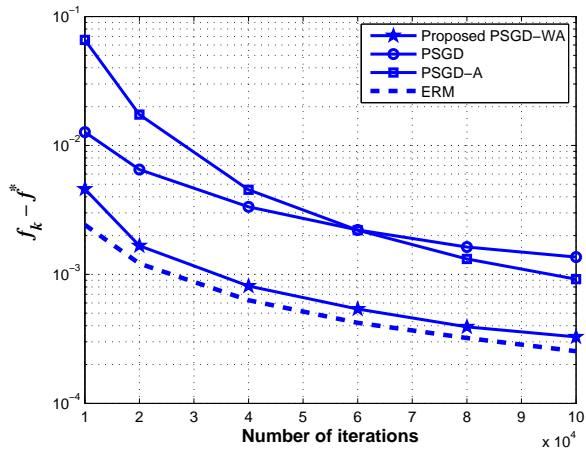,
width=0.51\textwidth}
\caption{The error as a function of the number of iterations under various PSGD algorithms as described in Sec. \ref{ssec:synthetic}.}
\label{fig:fig2}
\end{figure}

\subsection{Experiments Over the Million Song Dataset}
\label{ssec:real}

In this section we examined the performance of the algorithms for prediction of a release year of a song from audio features. We used the dataset available by UCI Machine Learning Repository \cite{Lichman:2013}, extracted from the Miliion Song Dataset collaborative project between The Echo Nest and LabROSA \cite{Bertin-Mahieux2011}. The Million Song Dataset contains songs which are mostly western, commercial tracks ranging from 1922 to 2011. Each song is associated with a released year (i.e., $y$ in our model that we aim to estimate), and $90$ audio attributes (i.e., $\boldsymbol{x}$ in our model).
We compared three streaming algorithms as described in Sec. \ref{ssec:synthetic}. Here, we did not assume prior knowledge on the constraint set of the parameters. Thus, the projected update degenerates to an unconstrained update:
\begin{center}
$
\displaystyle\boldsymbol{\omega}_{k+1}=\boldsymbol{\omega}_{k+1}-\lambda_k\boldsymbol{g}_{k}.
$
\end{center}

In Fig. \ref{fig:fig3}, we present the average prediction error of the released year of a song $|\hat{y}_k-y_k|$ as a function of the number of iterations. It can be seen that the proposed PSGD-WA algorithm performs the best among the streaming algorithms for all tested $k$. It can also be seen that the standard PSGD performs the worst for all tested $k$. It should be noted that the simulation results demonstrate that PSGD-A has high decreasing rate, thus, expected to perform well as $k$ becomes large. In Fig. \ref{fig:fig4}, we present the average normalized (i.e., the range [1922, 2011] was mapped to [0,1]) prediction square error of the released year of a song as a function of the number of iterations. It can be seen that the proposed PSGD-WA algorithm performs the best among the streaming algorithms for $k<4\cdot 10^5$, where PSGD-A algorithm performs the best for $k\geq 4\cdot 10^5$, thus, expected to perform well as $k$ becomes very large. These results confirm the advantages of the proposed PSGD-WA algorithm in the finite sample regime and provide important design principles when implementing PSGD algorithms for regression tasks.

\begin{figure}[htbp]
\centering \epsfig{file=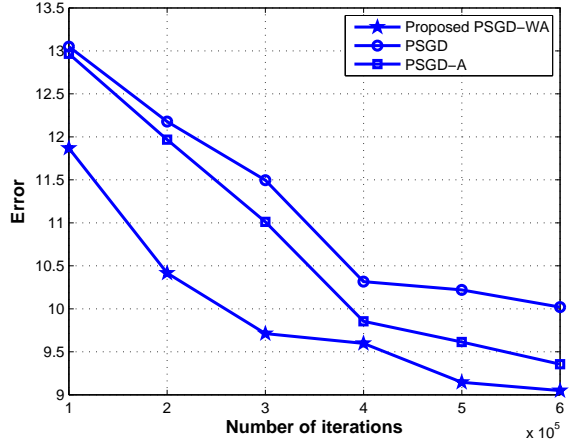,
width=0.51\textwidth}
\caption{average prediction error of the released year of a song $|\hat{y}_k-y_k|$ as a function of the number of iterations under various PSGD algorithms as described in Sec. \ref{ssec:real}.}
\label{fig:fig3}
\end{figure}

\begin{figure}[htbp]
\centering \epsfig{file=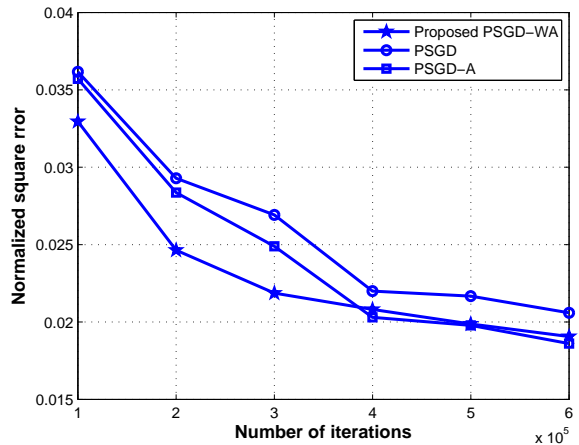,
width=0.51\textwidth}
\caption{average normalized (i.e., the range [1922, 2011] was mapped to [0,1]) prediction square error of the released year of a song as a function of the number of iterations under various PSGD algorithms as described in Sec. \ref{ssec:real}.}
\label{fig:fig4}
\end{figure}

\section{Conclusion}
\label{sec:conclusion}
We considered a least squares regression of a $d$-dimensional unknown parameter. We proposed and analyzed a stochastic gradient descent algorithms with weighted iterate-averaging that uses a single pass over the data. When the constraint set of the unknown parameter is bounded, we provided an explicit $O(1/k)$ upper bound on the convergence rate, showing that the variance term dominates the error and decreases with rate $1/k$, while the term which is related to the size of the constraint set decreases with rate $\log k/k^2$. We then compared the asymptotic ratio $\rho$ between the convergence rate of the proposed scheme and the empirical risk minimizer (ERM) as the number of iterations approaches infinity. Under some mild conditions, we showed that $\rho\leq 4$ for all $d\geq 1$. We further improved the upper bound by showing that $\rho\leq 4/3$ for the case of $d=1$ and when the parameter set is unbounded.

Simulation results over synthetic data demonstrate strong performance of the algorithm as compared to existing methods, and coincide with $\rho\leq 4/3$ even for large $d$ in practice. We also tested the algorithm over the Million Song Dataset and strong performance has been obtained as compared to existing methods under the finite sample regime

It should be noted that SGD with a constant step size does not converge to the global optimum in general \cite{nedic2001convergence, bach2013non}. Thus, it is desirable to analyze the proposed PSGD-WA algorithm with decreasing step size under other loss functions (e.g., logistic regression) as a future research direction.

\newcommand*{\QEDA}{\hfill\ensuremath{\blacksquare}}%
\QEDA

\bibliographystyle{ieeetr}
%\bibliography{Co_Ne_Sr_PSGD_TAC_bib}

%
\end{document}